\documentclass[12pt,a4paper,reqno]{amsart}

\synctex=1

\usepackage[margin=2.8cm,footskip=1cm]{geometry}
\setlength{\marginparwidth}{2.4cm}
\usepackage[utf8]{inputenc}
\usepackage[T1]{fontenc}
\usepackage[english]{babel}
\usepackage{amsmath}
\usepackage{amsfonts}
\usepackage{amssymb}
\usepackage{amsthm}
\usepackage{ucs}
\usepackage{graphicx}
\usepackage{xcolor}
\usepackage{dsfont}
\usepackage{tikz}
\usepackage{wasysym}
\usepackage{physics}
\usepackage{mathtools}
\usepackage{xifthen}
\usepackage[textwidth=2.5cm,textsize=tiny]{todonotes}
\usepackage{enumitem}
\usepackage{stmaryrd}
\usepackage{constants}
\usepackage[ruled]{algorithm2e}
\usepackage{tikz,pgfplots}
\usepackage[hidelinks]{hyperref}
\usepackage[font={small}]{caption}



\usepackage{constants}

\newconstantfamily{ctrlcst}{
	symbol=\kappa
}
\newconstantfamily{csts}{
	symbol=C
}
\renewconstantfamily{normal}{
	symbol=c
}

\makeatletter
\newcommand{\pushright}[1]{\ifmeasuring@#1\else\omit\hfill$\displaystyle#1$\fi\ignorespaces}
\newcommand{\pushleft}[1]{\ifmeasuring@#1\else\omit$\displaystyle#1$\hfill\fi\ignorespaces}
\makeatother



\renewcommand{\norm}[1]{\|#1\|}


\newcommand{\Z}{\mathbb{Z}}
\newcommand{\R}{\mathbb{R}}

\newcommand{\Zd}{\mathbb{Z}^d}

\newcommand{\Ed}{\mathrm{E}^d}


\newcommand{\e}{\mathbb{E}}
\newcommand{\p}{\mathbb{P}}


\newcommand{\PPP}{\mathcal{T}}


\newcommand{\bbZ}{\mathbb{Z}}

\newcommand{\bbT}{\mathbb{T}}
\newcommand{\bbL}{\mathbb{L}}


\newcommand{\calC}{\mathcal{C}}



\newcommand{\rms}{\mathrm{s}}
\newcommand{\rmt}{\mathrm{t}}



\newcommand{\betac}{\beta_{\mathrm{\scriptscriptstyle c}}}


\newcommand{\FKlaw}{\Phi}

\newcommand{\clusterSet}{\mathrm{cl}}



\theoremstyle{plain}
\newtheorem{theorem}{Theorem}[section]
\newtheorem{lemma}[theorem]{Lemma}

\newtheorem{corollary}[theorem]{Corollary}

\newtheorem{remark}{Remark}[section]

\theoremstyle{definition}

\newtheorem{obs}{Observation}


\author{S\'{e}bastien Ott}
\address{Dipartimento di Matematica e Fisica, Univ. Roma Tre, 00146 Roma, Italy}
\email{ott.sebast@gmail.com}


\title[Weak mixing and analyticity in R.C. and l.t. Ising models]{Weak mixing and analyticity in Random Cluster and low temperature Ising models}

\begin{document}

\begin{abstract}
	In this note we extend the analysis of~\cite{Ott-2019} to the random cluster model. The main result being that the pressure of the finite range ferromagnetic Ising model is analytic as a function of the inverse temperature in the regime \(h=0\), \(\beta>\betac\).
\end{abstract}

\maketitle


\section{Introduction}

In this note we extend the analysis done for the Ising model in~\cite{Ott-2019} to the Random Cluster model. As a result, we obtain that the pressure (free energy) and local observables of the Random Cluster model are analytic in \(\beta\) whenever the model mixes exponentially fast. The main objective being to prove that the pressure of the finite range ferromagnetic Ising model in dimension \(\geq 3\)  is analytic as a function of the inverse temperature in the regime \(\beta>\betac\).

The treatment follows the one of~\cite{Ott-2019}, relying on mixing of the Glauber dynamic, coupling from the past and cluster expansion. The main difficulty is that the Glauber dynamic for the Random Cluster model is non-local. This additional technicality is dealt with using the results of~\cite{Harel+Spinka-2018} that extend the ones of~\cite{Martinelli+Olivieri-1994} on the Glauber dynamic and a modified coarse-graining argument.

Some technicalities are not repeated here and references to the corresponding arguments/results in~\cite{Ott-2019} are given instead.

The results in dimension \(2\) (for nearest-neighbours models in the LT regime) are known for a fairly long time: duality enable to map the problem to a high-temperature one for which stronger results hold (restricted complete analyticity). See~\cite{Dobrushin+Shlosman-1987,Martinelli+Olivieri+Schonmann-1994,Schonmann+Shlosman-1995,van_Enter+Fernandez+Schonmann+Shlosman-1997} for more details.

\section{Random cluster, Potts model and main results}

\subsection{Some notations}

We denote \(\Ed\) the set of nearest neighbour pairs in \(\Zd\) and will work mainly with the graph \((\Zd,\Ed)\). We will denote \(\bbT_N = \{-N,\cdots,N\}^d\) the \(d\)-dimensional torus that we shall often see as a subset of \(\Zd\), and \(E_{\bbT_N}\) the nearest neighbour edges of \(\bbT_N\).

For \(L\) and \(N\) such that \(2N+1\) is divisible by \(2L+1\), we define \(\bbT_N^L\) to be the subset of \(\bbT_N\) given by
\begin{equation}
	\bbT_N^L = ((2L+1)\Z)^d\cap \bbT_N.
\end{equation}
We will also see \(\bbT_N^L\) as a graph with edges between sites at distance \(2L+1\) (for the graph distance on \(\bbT_N\)). We say that two points in \(\bbT_N^L\) are \emph{connected} in a given \(F\subset\bbT_N^L\) if they are in the graph \(F\) and \emph{star-connected} (or connected \emph{diagonally}) if there exists a sequence of vertices in \(F\) linking them such that any two consecutive vertices are at distance at most \(\sqrt{2}(2L+1)\) from one another.
We denote \(\Gamma_{N}^L\) the set of connected sub-graphs of \(\bbT_N^L\).

For \(x\in\Zd\) (or \(x\in \bbT_N\)) and \(K\geq 0\), we denote \(B_K(x)=\{y\in\Zd:\ \norm{x-y}_{\infty} \leq K\}\) and similarly for \(\bbT_N\). In particular, we have
\begin{gather*}
	B_L(x) \cap B_L(y) =\varnothing,\ \forall x\neq y\in\bbT_N^L,\\
	\bigcup_{x\in\bbT_N^L} B_L(x) = \bbT_N.
\end{gather*}
For \(x\in\bbT_N^L\), denote
\begin{equation*}
E_L(x) = \big\{ \{i,j\}:\ i\in B_L(x), j\in \{i+\mathrm{e}_k,\ k=1,\cdots, d \} \big\},
\end{equation*}where \((\mathrm{e}_k)_s=\delta_{s,k}\). They partition \(E_{\bbT_N}\). We write \(E_L\equiv E_L(0)\) and \(B_L\equiv B_L(0)\).

For functions \(f:\Zd\to \R\) of \(\Ed\to \R\), and \(A,B\subset \Zd\) (or \(\Ed\)) we denote \(f_{A}\) the restriction of \(f\) to \(A\) and, for \(A\cap B=\varnothing\) and \(f,f'\) two functions, \(f_Af'_B\) is the function from \(A\cup B\to\R\) agreeing with \(f\) on \(A\) and with \(f'\) on \(B\).

\(c,c',c'',\tilde{c},\cdots\) denote constants that are allowed to depend on \(\beta,q,d\). Their values can change from one line to another.

\subsection{Ising and Potts models}

Let \(V\subset \Zd\) be finite. Let \(\eta\in\{-1,1\}^{\Zd}\), \(\beta\geq 0\) and \(h\in\R\). The Ising model on \(V\) at inverse temperature \(\beta\) and magnetic field \(h\) with boundary conditions \(\eta\) is the probability measure on \(\{-1,+1\}^{V}\) given by
\begin{equation*}
	\mu_{V,\beta,h}^{\eta}(\sigma) = \frac{1}{Z_{V,\beta,h}^{\mathrm{Ising},\eta}} \exp(\beta\sum_{\substack{\{i,j\}\subset V\\i\sim j}} \sigma_i\sigma_j + \beta\sum_{\substack{i\in V,j\in V^c\\i\sim j}}\sigma_i\eta_j  + h\sum_{i\in V}\sigma_i),
\end{equation*}where \(Z_{V,\beta,h}^{\mathrm{Ising},\eta}\) is the normalization constant (the partition function).

The Potts model is a generalization of the Ising model. Let \(q\geq 2\) be an integer, \(\eta\in\{1,\cdots,q\}^{\Zd}\) and \(\beta\geq 0\). The Potts model on \(V\) at inverse temperature \(\beta\) with boundary conditions \(\eta\) is the probability measure on \(\{1,\cdots,q\}^{V}\) given by
\begin{equation*}
\mu_{V,\beta,q}^{\eta}(\sigma) = \frac{1}{Z_{V,\beta,q}^{\mathrm{Potts},\eta}} \exp(\beta\sum_{\substack{\{i,j\}\subset V\\i\sim j}} \delta_{\sigma_i,\sigma_j} + \beta\sum_{\substack{i\in V,j\in V^c\\i\sim j}} \delta_{\sigma_i,\eta_j}),
\end{equation*}where \(Z_{V,\beta,q}^{\mathrm{Potts},\eta}\) is the normalization constant (the partition function). The Ising model with \(h=0\) at inverse temperature \(\beta\) is equivalent to the Potts model with \(q=2\) at inverse temperature \(2\beta\). Their partition functions are related via
\begin{equation}
	Z_{V,\beta,0}^{\mathrm{Ising},\eta} = e^{-\beta|E_V|}Z_{V,\beta,2}^{\mathrm{Potts},\eta}
\end{equation}where \(E_V\) is the set of nearest-neighbour edges with at least one endpoint in \(V\) ( and we do the assimilation \(1\mapsto 1\) and \(2\mapsto -1\) for the boundary conditions).
The central objects of the present study are the pressures:
\begin{gather}
	\psi_{\mathrm{Ising}}(\beta,h) := \lim_{V\to\Zd}\frac{1}{|V|}\log(Z_{V,\beta,h}^{\mathrm{Ising},\eta}),\\
	\psi_{\mathrm{Potts},q}(\beta) := \lim_{V\to\Zd}\frac{1}{|V|}\log(Z_{V,\beta,q}^{\mathrm{Potts},\eta}),
\end{gather}where the limits are taken in the sense of van Hove. It can be shown (see~\cite{Friedli+Velenik-2017}) that the limit exists and does not depend on the sequence of volumes \(V\) nor on the boundary conditions. We will thus work with cubic volumes and periodic boundary conditions (models on the torus). We will use the notation \(\mathrm{per}\) to denote periodic boundary conditions.

\subsection{Random cluster model}

First introduce a convention: Let \(G=(V,E)\) be a graph. Let \(F\subset E\). We will systematically identify the graph \((V,F)\), the set of edges \(F\) and the function \(E\to \{0,1\}\) that takes value one if the edge is in \(F\) and \(0\) otherwise. It will thus make sense of saying that a function is a subgraph and vice-versa.

Let \(q\geq 1\) and \(\beta\geq 0\). Let \(\eta\in\{0,1\}^{\Ed}\) be such that \(\eta\setminus F\) has at most one infinite cluster for any \(F\subset \Ed\) finite. The random cluster model on \(F\subset \Ed\) at inverse temperature \(\beta\) with cluster weight \(q\) and boundary conditions \(\eta\) is the probability measure on \(\{0,1\}^{F}\) given by
\begin{equation}
	\FKlaw_{F,\beta,q}^{\eta}(\omega) = \frac{1}{Z^{\mathrm{RC},\eta}_{F,\beta,q}} (e^{\beta}-1)^{|\omega|} q^{\kappa_{F}^{\eta}}(\omega),
\end{equation}where \(|\omega| = \sum_{e\in F}\omega(e)\) and \(\kappa_{F}^{\eta}\) is the number of connected components of the graph \((\Zd,\omega\eta_{F^c})\) intersecting \(V_F\) where \(\omega\eta_{F^c}\) is the set of edges obtain by the union of the edges of \(\omega\) and of the edges of \(\eta\setminus F\) and \(V_F\) is the set of vertices being the endpoint of an edge in \(F\).

As in the Potts and Ising cases, introduce the pressure:
\begin{equation}
	\psi_{\mathrm{RC},q}(\beta) := \lim_{F\to\Ed}\frac{1}{|V_F|}\log(Z_{F,\beta,q}^{\mathrm{RC},\eta}),
\end{equation}as previously (\(|V_F|\) is the set of vertices with an endpoint in \(F\)), the limit is taken in the sense of van Hove and does not depend on the sequence of volumes of boundary conditions. We will thus again use periodic boundary conditions and cubic volumes.
\begin{remark}
	Percolation adepts will notice that this definition gives something trivial for \(q=1\) (Bernoulli percolation). Our main interest lying in the application to Ising and Potts model, this is the correct way of defining the pressure. One can recover classical quantities of Bernoulli percolation as follows (\(p=1-e^{-\beta}\)):
	\begin{itemize}
		\item \(\kappa(p)\), the mean number of cluster per vertex is obtained as the limit of the derivative in \(q\) of \(|B_N|^{-1}\log(Z_{B_N,\beta,q})\) evaluated at \(q=1\) (with \(B_N\) the cubic box of side \(2N+1\).
		\item \(\chi(p)\), the mean size of the cluster of the origin, can be obtain by adding a magnetic field to the picture:
		\begin{equation*}
			Z_{N,\beta,q,h} = \sum_{\omega} (e^{\beta}-1)^{|\omega|} \prod_{C\in\clusterSet(\omega)} (e^{h|C|}-1+q).
		\end{equation*}
		\(\chi(p)\) is then obtained as the limit of the second derivative in \(h\) of \linebreak \(\frac{1}{q-1}\frac{1}{|B_N|} \log(Z_{N,\beta,q,h})\) evaluated at \(h=0,q=1\).
	\end{itemize}
	In particular, our analyticity result gives no useful information about analyticity of classical quantities for Bernoulli percolation. See~\cite{Grimmett-1999} for more details.
\end{remark}

The random cluster model is closely linked to the Potts and Ising models via the Edward-Sokal coupling but the only feature we will use is
\begin{equation}
	Z_{E_{\mathbb{T}_N},\beta,q}^{\mathrm{RC},\mathrm{per}} = Z_{\mathbb{T}_N,\beta,q}^{\mathrm{Potts},\mathrm{per}},
\end{equation}where \(\mathbb{T}_N\) is the \(d\)-dimensional torus with side \(2N+1\), \(\mathbb{T}_N=\{-N,\cdots, N\}^d\) (and \(E_{\mathbb{T}_N}\) is the set of nearest neighbour edges in \(\mathbb{T}_N\)).

A central feature of the random cluster measures is that they have the lattice FKG property (for the canonical partial order on \(\{0,1\}^{\Ed}\)). In particular, if \(1\) (resp. \(0\)) denote the constant configuration \(1\) (resp. \(0\)), one has the stochastic ordering
\begin{equation*}
	\FKlaw_{F,\beta,q}^{0} \preccurlyeq \FKlaw_{F,\beta,q}^{\eta} \preccurlyeq \FKlaw_{F,\beta,q}^{1}.
\end{equation*}

Our main result is conditional to two (natural) hypotheses. Denote \(\Lambda_N\) the set of edges with at least one endpoints at \(\norm{\ }_{\infty}\)-distance \(\leq N\) from \(0\). Our first hypotheses is mixing:
\begin{enumerate}[label=(H\arabic*)]
	\item \label{hyp:exp_weak_mix} Exponential weak mixing: for any \(a>1\), there exist \(c>0,N_0\geq 0\) such that for any \(N\geq N_0\),
	\begin{equation*}
	d_{\mathrm{TV}}\Big(\FKlaw_{\Lambda_{aN},\beta,q}^{0}|_{\Lambda_N},\FKlaw_{\Lambda_{aN},\beta,q}^{1}|_{\Lambda_N}\Big)\leq e^{-cN},
	\end{equation*}where \(\FKlaw_{\Lambda_{aN},\beta,q}^{*}|_{\Lambda_N}\) is the random cluster measure \(\FKlaw_{\Lambda_{aN},\beta,q}^{*}\) restricted to \(\Lambda_N\) and \(d_{\mathrm{TV}}\) is the total variation distance.
\end{enumerate}
The second hypotheses is more specific to the Random Cluster model and it should be unnecessary in the derivation of Theorem~\ref{thm:main} (beside being used in the proof of the first one). It declines in two versions corresponding to the regimes of non-percolation and percolation.

\begin{enumerate}[resume, label=(H\arabic*)]
	\item \label{hyp:exp_dec_finite_connexions} Exponential decay of finite connexions. One of the two following occurs
	\begin{enumerate}[label=(H2.\arabic*)]
		\item For any \(a>1\), there exist \(c>0,N_0\geq 0\) such that for any \(N\geq N_0\), \label{hyp:non_perco}\begin{equation*}
			\FKlaw_{B_{aN},\beta,q}^{1}\big( A_N \big) \leq e^{-cN},
		\end{equation*}
		where \(A_N\) is the event that \(B_{N}\) contains a cluster of diameter larger than \(L/100\).
		\item For any \(a>1\), there exist \(c>0,N_0\geq 0\) such that for any \(N\geq N_0\), \label{hyp:perco}\begin{equation*}
		\sup_{\eta} \FKlaw_{B_{aN},\beta,q}^{\eta}\big(A'_N\big) \geq 1- e^{-cN},
		\end{equation*}where \(A'_N\) is the event that \(B_N\) contains a cluster connecting all sides of \(B_N\) and that the second largest cluster in \(B_N\) is of radius at most \(N/100\).
	\end{enumerate}
\end{enumerate}

\subsection{Results}

The results are stated and the proofs conducted for the nearest-neighbour models but both extend to finite range models.

Our main result is
\begin{theorem}
	\label{thm:main}
	Let \(q,\beta\) be such that \(\FKlaw_{\beta,q}\) satisfies~\ref{hyp:exp_dec_finite_connexions} and~\ref{hyp:exp_weak_mix}. Then, there exists \(\epsilon>0\) such that the function
	\begin{equation*}
		z\mapsto \psi_{\mathrm{RC},q}(z)
	\end{equation*}is analytic in the domain \(\{z\in\mathbb{C}:\ |z-\beta|<\epsilon\}\).
\end{theorem}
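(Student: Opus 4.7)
The plan is to follow the coupling-from-the-past plus cluster expansion scheme of \cite{Ott-2019}, adapted to the non-local setting of the random cluster model. Working on the torus $\bbT_N$ with periodic boundary conditions, one aims at a polymer representation
\begin{equation*}
\log Z_{E_{\bbT_N},\beta,q}^{\mathrm{RC},\mathrm{per}} = \sum_{\Gamma} w(\Gamma),
\end{equation*}
with $\Gamma$ ranging over finite space-time polymers on the coarse-grained lattice $\bbT_N^L$ and weights satisfying $|w(\Gamma)|\leq e^{-c|\Gamma|}$ uniformly in $N$. Hypothesis~\ref{hyp:exp_weak_mix}, together with the extension by~\cite{Harel+Spinka-2018} of~\cite{Martinelli+Olivieri-1994} to the random cluster heat-bath dynamics, supplies the key input: on any box of linear size $N$ the dynamics mixes exponentially fast, so CFTP terminates after time $O(N)$ with overwhelming probability and the measure can be read off from a finite random history of updates.

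The main step is to coarse-grain $\bbT_N$ into the blocks $(B_L(x))_{x\in\bbT_N^L}$ at a scale $L=L(\beta)$ chosen large, and to encode the CFTP history as polymers indexed by space-time causal cones on $\bbT_N^L$. Unlike in the Ising setting of~\cite{Ott-2019}, a single edge update in the random cluster model requires global connectivity information, so the a priori causal cone of an update is the whole volume; this is where hypothesis~\ref{hyp:exp_dec_finite_connexions} enters. In the non-percolative regime~\ref{hyp:non_perco}, on an event of probability $1-e^{-cL}$ no cluster touching $B_L(x)$ leaves a small sub-block, so the connectivity queries needed to resample $E_L(x)$ become local. In the percolative regime~\ref{hyp:perco}, the block contains a unique giant cluster connecting all its sides together with only microscopic secondary clusters, and this giant cluster serves as a stable background against which local fluctuations can be resampled without consulting the rest of the torus. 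Declaring a block \emph{good} on this event and combining the resulting dynamic localization with the CFTP mixing bound produces the polymer representation above, the bound on $|w(\Gamma)|$ being the product of a bad-block probability and a CFTP tail, both exponentially small in $|\Gamma|$.

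For real $\beta$, a Kotecky--Preiss cluster expansion of the polymer model provides an absolutely convergent series for $\psi_{\mathrm{RC},q}(\beta)$ in the thermodynamic limit. To upgrade this to analyticity on a complex disk $\{|z-\beta|<\epsilon\}$, I would observe that, once a polymer support is fixed, $w(\Gamma)$ is a finite sum of finite products of local quantities --- the RC edge weight $e^z-1$ and the complexified heat-bath transition probabilities --- whose dependence on $z$ is rational and non-singular near $\beta$. Each $w(\Gamma)$ thus extends analytically to a neighborhood of $\beta$, and by continuity of these finite products in $z$ the exponential bound on $|w(\Gamma)|$ persists on a uniform disk of some radius $\epsilon=\epsilon(L)>0$. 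The Kotecky--Preiss criterion continues to hold on this disk, and the uniform absolute convergence of the polymer series combined with Weierstrass' theorem yields analyticity of $\psi_{\mathrm{RC},q}$.

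The genuinely new difficulty, and the main obstacle, lies in making the good-block analysis work in the percolative case~\ref{hyp:perco}. One has to design polymers that record only local fluctuations of the giant cluster while remaining insensitive to its global identity, and to show that disjoint good blocks can be updated independently while coherently agreeing on this common background; this is the content that~\ref{hyp:exp_dec_finite_connexions} combined with the modified coarse-graining of~\cite{Harel+Spinka-2018} is designed to furnish, and it is the step where deviations from the Ising treatment of~\cite{Ott-2019} are most substantial.
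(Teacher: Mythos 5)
Your overall architecture (graphical Glauber dynamics, coupling from the past, coarse-grained good blocks whose goodness uses~\ref{hyp:exp_weak_mix} for local determination and~\ref{hyp:exp_dec_finite_connexions} to localize the connectivity query in the update rates, then a convergent polymer expansion) matches the paper's strategy, and your reading of why~\ref{hyp:non_perco}/\ref{hyp:perco} make the non-local dynamics effectively local is essentially the role of decoupling surfaces in the paper. However, there is a genuine gap in the step where the complex parameter enters. You propose to write \(\log Z^{\mathrm{RC},\mathrm{per}}_{E_{\bbT_N},\beta,q}\) directly as a sum over CFTP causal-cone polymers whose weights involve ``complexified heat-bath transition probabilities''. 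Two problems: first, the CFTP construction samples the measure \(\FKlaw_{N,\beta}\); it does not by itself produce a representation of the partition function, so it is not explained how \(\log Z\) decomposes into such polymers at all. Second, complexifying the transition rates is not viable: the monotonicity/FKG structure on which the coupling-from-the-past and the stochastic-domination (good-block) estimates rest only makes sense for real rates in \([0,1]\), and a causal cone contains an unbounded (Poissonian) number of updates, so there is no route to the uniform bound \(|w(\Gamma)|\le e^{-c|\Gamma|}\) on a complex disk from ``non-singular rational dependence'' of individual update rates.

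The missing idea is a tilting identity that confines all probabilistic constructions to the \emph{real} parameter \(\beta\) and lets \(z\) enter only through a deterministic local density: write
\begin{equation*}
\frac{Z^{\mathrm{per}}_{\mathbb{T}_N,\beta+z}}{Z^{\mathrm{per}}_{\mathbb{T}_N,\beta}}
= \FKlaw_{N,\beta}^{\mathrm{per}}\Big(\alpha_z^{|\omega|}\Big),
\qquad \alpha_z=\frac{e^{\beta+z}-1}{e^{\beta}-1},
\end{equation*}
then expand \(\alpha_z^{|\omega|}=\prod_x\big(1+f_x(\omega)\big)\) with \(f_x(\omega)=\alpha_z^{|\omega\cap E_L(x)|}-1\) small for \(|z|<\epsilon\), and evaluate the resulting terms under a coupling \(P\) of \(\omega\sim\FKlaw_{N,\beta}^{\mathrm{per}}\) with dependency clusters \(C_x\) (built from the real-\(\beta\) CFTP and the good-block/decoupling-surface analysis) satisfying a factorization property over disjoint cluster sets and exponential tails on \(|C_x|\). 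The polymer weights are then expectations under \(P\) of products of the small factors \(f_x\) restricted to clusters, which gives the bound needed for the Kotecky--Preiss criterion uniformly in \(N\) on a disk \(|z|<\epsilon\). Finally, note that absolute convergence alone gives non-vanishing of \(G_N(z)=\FKlaw_{N,\beta}^{\mathrm{per}}(\alpha_z^{|\omega|})\) and uniform boundedness of the finite-volume quantities \(\frac{1}{|\mathbb{T}_N|}\log G_N(z)\); to pass to the limit you still need an argument such as Vitali's theorem combined with the known convergence for real \(z\) (Weierstrass alone presupposes locally uniform convergence of the finite-volume pressures, which your sketch does not establish).
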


From Theorem~\ref{thm:main}, one can deduce the main motivation of this note
\begin{corollary}
	Assume \(d\geq 3\). For any \(\beta>\betac\), there exists \(\epsilon>0\) such that
	\begin{equation*}
		z\mapsto \psi_{\mathrm{Ising}}(z,0)
	\end{equation*}
	is analytic in the domain \(\{z\in\mathbb{C}:\ |z-\beta|<\epsilon\}\).
\end{corollary}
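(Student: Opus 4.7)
The plan is to reduce to Theorem~\ref{thm:main} via the Ising--Potts--FK correspondence. Combining the Ising--Potts identity (at $h=0$, Ising at $\beta$ is equivalent to Potts$(q{=}2)$ at $2\beta$, so $Z^{\mathrm{Ising}}_{V,\beta,0} = e^{-\beta|E_V|}\,Z^{\mathrm{Potts}}_{V,2\beta,2}$), the Edwards--Sokal identity $Z^{\mathrm{RC},\mathrm{per}}_{E_{\bbT_N},2\beta,2} = Z^{\mathrm{Potts},\mathrm{per}}_{\bbT_N,2\beta,2}$, and the fact that $|E_{\bbT_N}|/|\bbT_N| \to d$, one obtains in the thermodynamic limit
\begin{equation*}
\psi_{\mathrm{Ising}}(\beta,0) = -\beta d + \psi_{\mathrm{RC},2}(2\beta).
\end{equation*}
Since $z \mapsto -zd$ is entire, analyticity of $\psi_{\mathrm{Ising}}(\cdot,0)$ at $\beta$ is equivalent to analyticity of $\psi_{\mathrm{RC},2}$ at $2\beta$.

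It therefore suffices to verify hypotheses \ref{hyp:exp_weak_mix} and \ref{hyp:exp_dec_finite_connexions} for $\FKlaw_{2\beta,2}$ when $\beta > \betac$. Since $2\beta$ then lies above the FK-percolation threshold for $q=2$, the relevant branch of \ref{hyp:exp_dec_finite_connexions} is the supercritical one, \ref{hyp:perco} (a giant crossing cluster plus exponentially small radius for the second-largest cluster in a box). In $d \geq 3$, the slab-percolation theorem of Bodineau, which extends Pisztora's coarse-graining to the whole supercritical phase $\beta > \betac$, supplies exactly the required coarse-grained control, and \ref{hyp:perco} then follows by standard renormalisation together with FKG monotonicity. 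The same slab-percolation input underlies \ref{hyp:exp_weak_mix} for FK-Ising in $d \geq 3$: one couples different boundary conditions through the unique giant clusters produced in a collection of slabs within the annulus $\Lambda_{aN}\setminus\Lambda_N$, which yields the exponential total-variation bound on the restriction to $\Lambda_N$.

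Applying Theorem~\ref{thm:main} then yields some $\epsilon' > 0$ such that $\psi_{\mathrm{RC},2}$ is analytic on $\{|z - 2\beta| < \epsilon'\}$, and the affine substitution gives the conclusion with $\epsilon = \epsilon'/2$. The main obstacle is not algebraic but rather the need to cite the correct low-temperature FK-Ising inputs (slab percolation, Pisztora-type coarse-graining) to validate \ref{hyp:perco} and \ref{hyp:exp_weak_mix} throughout $\beta > \betac$; consistency of conventions, especially the factor of two between Ising and Potts temperatures, is the only delicate bookkeeping point.
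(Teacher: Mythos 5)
Your overall reduction is the same as the paper's: verify \ref{hyp:exp_weak_mix} and \ref{hyp:perco} for the $q=2$ random cluster measure in the supercritical regime, apply Theorem~\ref{thm:main}, and transfer analyticity back through the partition function identities (the bookkeeping $\psi_{\mathrm{Ising}}(\beta,0)=-\beta d+\psi_{\mathrm{RC},2}(2\beta)$ and the choice $\epsilon=\epsilon'/2$ are fine, as is invoking Pisztora's coarse-graining together with Bodineau's slab-percolation result for \ref{hyp:perco}).

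There is, however, a genuine gap in your treatment of \ref{hyp:exp_weak_mix}. You assert that ``the same slab-percolation input underlies'' weak mixing, via a coupling of boundary conditions through giant clusters in slabs inside the annulus $\Lambda_{aN}\setminus\Lambda_N$. This does not work as stated: in the random cluster model the influence of boundary conditions is not screened by an open connected surface alone, because the cluster-counting factor $q^{\kappa}$ makes the interaction nonlocal, and the presence of a crossing cluster in the annulus does not by itself control the total variation distance of the restrictions to $\Lambda_N$ --- one needs, in effect, exponential decay of \emph{truncated} correlations (equivalently, exponentially small finite-volume corrections), which is strictly stronger than what Pisztora/Bodineau-type coarse-graining yields. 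Indeed, exponential weak mixing for FK-Ising in the whole regime $\beta>\betac$, $d\geq 3$ remained open long after slab percolation was established, and the paper obtains \ref{hyp:exp_weak_mix} by citing it as the main result of Duminil-Copin, Goswami and Raoufi (2019) on exponential decay of truncated correlations for the Ising model. So the correct fix is not a renormalisation coupling but the invocation of that theorem (your use of slab percolation should be confined to \ref{hyp:perco}, where it is exactly what is needed).
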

\begin{proof}
	Hypotheses~\ref{hyp:exp_weak_mix} for \(q=2,\beta>\betac, d\geq 3\) is the main result of~\cite{Duminil-Copin+Goswami+Raoufi-2019}. Hypotheses~\ref{hyp:perco} follows from the validity of Pisztora's coarse-graining,~\cite{Pisztora-1996}, which is a consequence of~\cite{Bodineau-2005} for \(q=2,\beta>\betac\). Using the correspondence between the partition functions of the Ising and Random Cluster models with \(q=2\) gives the result in \(d\geq 3\).
\end{proof}
As mentioned in the introduction, more is known for nearest-neighbour models in dimension \(2\) via planar duality.

\begin{corollary}
	For any \(q\geq 1\), and any \(\beta<\betac(q)\), there exists \(\epsilon>0\) such that
	\begin{equation*}
	z\mapsto \psi_{\mathrm{RC},q}(z)
	\end{equation*}
	is analytic in the domain \(\{z\in\mathbb{C}:\ |z-\beta|<\epsilon\}\). In particular, the same holds for \(\psi_{\mathrm{Potts},q}(z)\) and \(q\geq 2\) integer.
\end{corollary}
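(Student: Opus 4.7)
The plan is to deduce this corollary from Theorem~\ref{thm:main} by verifying the two hypotheses~\ref{hyp:exp_weak_mix} and~\ref{hyp:exp_dec_finite_connexions} in the subcritical regime $\beta<\betac(q)$. The key input I would import is sharpness of the phase transition for the random cluster model with $q\geq 1$ (Duminil-Copin--Raoufi--Tassion): for every $\beta<\betac(q)$ there is $c=c(\beta,q,d)>0$ such that
\begin{equation*}
\FKlaw^1_{F,\beta,q}\bigl(x\leftrightarrow y\bigr)\leq e^{-c\norm{x-y}_\infty}
\end{equation*}
uniformly over finite $F\supset\{x,y\}$, and stochastic domination transfers this bound to arbitrary boundary conditions.

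First I would verify the non-percolative version~\ref{hyp:non_perco}. The event $A_N$ that $B_N$ contains a cluster of diameter at least $N/100$ is contained in the union, over pairs $x,y\in B_N$ with $\norm{x-y}_\infty\geq N/100$, of the connection events $\{x\leftrightarrow y\}$. A union bound combined with the exponential two-point estimate yields $\FKlaw^1_{B_{aN}}(A_N)\leq C\, N^{2d}\, e^{-cN/100}\leq e^{-c'N}$ for all $N$ sufficiently large, which is~\ref{hyp:non_perco}.

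For~\ref{hyp:exp_weak_mix} I would use the monotone (FKG) coupling $(\omega^0,\omega^1)$ of $\FKlaw^0_{\Lambda_{aN}}$ and $\FKlaw^1_{\Lambda_{aN}}$, with $\omega^0\leq\omega^1$ almost surely. In this coupling the set of edges where $\omega^0$ and $\omega^1$ differ is contained in the set of edges belonging to a cluster of $\omega^1$ that touches $\partial\Lambda_{aN}$: indeed, away from such clusters, both measures feel the same effective boundary. Thus $\omega^0$ and $\omega^1$ agree on $\Lambda_N$ on the event $\mathcal{E}$ that no $\omega^1$-cluster touches both $\Lambda_N$ and $\partial\Lambda_{aN}$; a union bound using the two-point estimate gives $\FKlaw^1_{\Lambda_{aN}}(\mathcal{E}^c)\leq |\Lambda_N|\,e^{-c(a-1)N}\leq e^{-c''N}$, and therefore
\begin{equation*}
\TotVarDist\bigl(\FKlaw^0_{\Lambda_{aN}}|_{\Lambda_N},\FKlaw^1_{\Lambda_{aN}}|_{\Lambda_N}\bigr)\leq e^{-c''N}.
\end{equation*}

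Plugging these into Theorem~\ref{thm:main} gives analyticity of $\psi_{\mathrm{RC},q}$ in a complex neighbourhood of any $\beta<\betac(q)$; the corresponding statement for $\psi_{\mathrm{Potts},q}$ with integer $q\geq 2$ then follows from the torus partition function identity $Z^{\mathrm{RC},\mathrm{per}}_{E_{\bbT_N},\beta,q}=Z^{\mathrm{Potts},\mathrm{per}}_{\bbT_N,\beta,q}$. The only mildly delicate point is justifying that in the monotone coupling the disagreement set is controlled by $\mathcal{E}^c$; but this is standard for random cluster measures (via a simultaneous edge-exploration that reveals open clusters from $\partial\Lambda_{aN}$ inwards), so no input beyond sharpness of the phase transition is needed.
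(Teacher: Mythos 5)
Your proposal is correct and is essentially the paper's own argument in expanded form: the paper's proof simply invokes exponential decay of connection probabilities uniformly over boundary conditions (from sharpness, \cite{Duminil-Copin+Raoufi+Tassion-2017} for \(q\geq 1\)) together with ``a simple use of the lattice FKG property'' to verify~\ref{hyp:non_perco} and~\ref{hyp:exp_weak_mix}, then applies Theorem~\ref{thm:main}; your union bound for~\ref{hyp:non_perco} and your monotone-coupling/exploration-from-the-boundary argument for~\ref{hyp:exp_weak_mix} are exactly the standard details behind that one-liner, and the Potts statement via \(Z^{\mathrm{RC},\mathrm{per}}_{E_{\bbT_N},\beta,q}=Z^{\mathrm{Potts},\mathrm{per}}_{\bbT_N,\beta,q}\) is as in the paper.
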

\begin{proof}
	Hypotheses~\ref{hyp:non_perco} and~\ref{hyp:exp_weak_mix} are verified whenever the model exhibit exponential decay of connections probabilities uniformly over boundary conditions (simple use of the lattice FKG property). The later was proven (in any dimensions) in the whole regime \(\beta<\betac\) for \(q=2\) in~\cite{Aizenman+Barsky+Fernandez-1987}, for \(q=1\) in~\cite{Aizenman+Barsky-1987,Menshikov-1986} (see also~\cite{Duminil-Copin+Tassion-2016} for an alternative proof of \(q=1,2\)) and for \(q\geq 1\) in~\cite{Duminil-Copin+Raoufi+Tassion-2017}.
\end{proof}

To state our last result, we need additional notation. For any \(A\subset \Ed \) let \(g_A(\omega) = \prod_{e\in A} \omega_e\). Let \(R<\infty\) and \(W\) be a complex function depending only on the edges at distance \(\leq R\) of \(0\). Denote \(W_x\) the translate of \(W\) by \(x\). Define then
\begin{gather*}
	Z_{F,\beta,q}^{\mathrm{RC},\eta}(\lambda W) = \sum_{\omega\subset F}(e^{\beta}-1)^{|\omega|} q^{\kappa_F^{\eta}(\omega)} e^{\lambda\sum_{x\in V_F} W_x(\omega)}\\
	\tilde{\psi}_{\mathrm{RC},q}(\beta,\lambda W) := \lim_{F\to\Ed}\frac{1}{|V_F|}\log(Z_{F,\beta,q}^{\mathrm{RC},\eta}),
\end{gather*}where the limit is taken in the sense of van Hove. It is not clear that the limit even exists (if it does, the classical arguments used for the existence of the pressure ensure that it does not depend on the sequence of volumes nor on the boundary conditions).

\begin{theorem}
	\label{thm:Correlation_analy}
	Let \(q,\beta\) be such that \(\FKlaw_{\beta,q}\) satisfies~\ref{hyp:exp_dec_finite_connexions} and~\ref{hyp:exp_weak_mix}. Then, for any \(A\subset \Ed\) finite, there exists \(\epsilon>0\) such that the function
	\begin{equation*}
	z\mapsto \FKlaw_{z,q}(g_A)
	\end{equation*}is analytic in the domain \(\{z\in\mathbb{C}:\ |z-\beta|<\epsilon\}\).
\end{theorem}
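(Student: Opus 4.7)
The plan is to deduce Theorem~\ref{thm:Correlation_analy} from the cluster expansion machinery built for Theorem~\ref{thm:main}, by inserting the observable as a small complex perturbation of the partition function and then differentiating.

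First I would choose $W$ supported on the edges of (a translate of) $A$ so that $W_0 \equiv g_A$. By translation invariance one formally has
\[
	\partial_\lambda\big|_{\lambda=0}\tilde{\psi}_{\mathrm{RC},q}(z, \lambda W)
	= \lim_{F\to\Ed}\frac{1}{|V_F|}\sum_{x\in V_F}\FKlaw_{F,z,q}^{\mathrm{per}}(W_x)
	= \FKlaw_{z,q}(g_A).
\]
It therefore suffices to prove that the finite-volume perturbed pressure $\psi_F(z,\lambda) := |V_F|^{-1}\log Z_{F,z,q}^{\mathrm{RC},\mathrm{per}}(\lambda W)$ converges, uniformly on compact sets, to a function $\tilde{\psi}_{\mathrm{RC},q}(z,\lambda W)$ which is jointly analytic in a bi-disk $\{|z-\beta|<\epsilon\}\times\{|\lambda|<\epsilon\}$. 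Weierstrass' theorem and Cauchy's formula then upgrade the $\lambda$-derivative at $0$ to an analytic function of $z$.

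Next I would re-run the coarse-graining cluster expansion of Theorem~\ref{thm:main} for the perturbed partition function $Z_{F,z,q}^{\mathrm{RC},\mathrm{per}}(\lambda W)$. Since $W$ is local (depending only on edges within distance $R$ of the origin), choosing the coarse scale $L\geq R$ guarantees that $e^{\lambda W_x(\omega)}$ is measurable with respect to a bounded number of adjacent coarse blocks $B_L(y)$, $y\in\bbT_N^L$. The induced polymer activities therefore acquire only a multiplicative factor bounded in modulus by $e^{|\lambda|\normsup{W}}$ per polymer. Since the Kotecký--Preiss type convergence criterion used to sum the expansion for Theorem~\ref{thm:main} depends only on absolute values of activities, taking $\epsilon$ small enough keeps $e^{|\lambda|\normsup{W}}$ sufficiently close to $1$ that the expansion still converges, uniformly in $F$, on the chosen bi-disk; this yields the required joint analyticity of the limit.

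The main obstacle is verifying that the coarse-graining and coupling-from-the-past argument behind Theorem~\ref{thm:main}, which uses hypotheses~\ref{hyp:exp_weak_mix} and~\ref{hyp:exp_dec_finite_connexions}, remains intact under the perturbation. Because those hypotheses are statements about the unperturbed measure $\FKlaw^{\eta}_{\cdot,\beta,q}$, the probabilistic input is unchanged; what must be checked is that the bookkeeping producing complex polymer weights absorbs the extra bounded factor without breaking the summability bounds. A secondary subtlety is the exchange of $\lim_{F\to\Ed}$ and $\partial_\lambda$ at $\lambda=0$ needed to identify the derivative of the limit with $\FKlaw_{z,q}(g_A)$, which follows from uniform-on-compacts convergence of $\psi_F$ on the bi-disk together with Cauchy's integral formula applied to the $\lambda$-variable.
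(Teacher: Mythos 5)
Your proposal is correct in outline, but it takes a genuinely different route from the paper. The paper does not go through the generalized pressure at all: it writes the finite-volume quantity directly as \(\FKlaw_{\beta+z}(g_A)=G_N(z)^{-1}\FKlaw_{N,\beta}^{\mathrm{per}}\big(\alpha_z^{|\omega|}g_A\big)\), replaces the block functions \(f_x\) by modified ones \(\tilde f_x^A\) on the finitely many coarse blocks meeting \(A\), and thus represents numerator and denominator as two polymer partition functions differing only in the activities of polymers touching \(\Delta_A\). Both cluster expansions converge by the same bound as in Lemma~\ref{lem:upper_bnd_weights}, so \(\FKlaw_{\beta+z}(g_A)\) is the exponential of a difference of two absolutely convergent sums, analytic in \(z\), and the \(N\to\infty\) limit is handled as in Lemma~\ref{lem:F_analytic} (Vitali). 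Your route instead proves a two-parameter (bi-disk) version of Theorem~\ref{thm:GenPressure_analy}, i.e.\ joint analyticity of \((z,\lambda)\mapsto\tilde\psi_{\mathrm{RC},q}(z,\lambda W)\) uniformly in the volume, and recovers \(\FKlaw_{z,q}(g_A)\) as \(\partial_\lambda\) at \(\lambda=0\), using exact translation invariance of the periodic measure to identify \(|V_F|^{-1}\sum_x\FKlaw_{F,z}^{\mathrm{per}}(W_x)\) with \(\FKlaw_{F,z}^{\mathrm{per}}(g_A)\), and Cauchy/Weierstrass to exchange the volume limit with the derivative. This buys a more general statement (analyticity of the perturbed pressure in two variables, hence of all such averaged observables at once) at the cost of two extra layers the paper avoids: the interchange of \(\lim_F\) and \(\partial_\lambda\), and the uniform-in-volume control of the perturbed expansion. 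Two points in your sketch need care. First, the perturbation does not contribute a factor \(e^{|\lambda|\normsup{W}}\) per polymer: each coarse block \(E_L(x)\) hosts of order \((2L+1)^d\) translates \(W_y\), so the per-block factor is \(e^{|\lambda|\normsup{W}(2L+1)^d}\) and the bound \(|f_x^{\lambda}|\le e^{-a_L}\) needed in Lemma~\ref{lem:upper_bnd_weights} forces \(\epsilon\) to be chosen after \(L\) (which is fine, since \(L\) is fixed first to make \(c_L\) beat the entropy), and the spillover of \(W_y\) into neighbouring blocks requires widening the notion of support/compatibility of polymers by one block layer --- exactly the ``notationally heavier'' changes the paper alludes to for Theorem~\ref{thm:GenPressure_analy}. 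Second, the identification of \(\lim_F\FKlaw_{F,z}^{\mathrm{per}}(g_A)\) with \(\FKlaw_{z,q}(g_A)\) for real \(z\) near \(\beta\) (uniqueness/convergence of the periodic measures) is implicit in your argument, but the paper's own formulation carries the same implicit convention, so this is not a gap relative to the paper.
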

\begin{theorem}
	\label{thm:GenPressure_analy}
	Let \(q,\beta\) be such that \(\FKlaw_{\beta,q}\) satisfies~\ref{hyp:exp_dec_finite_connexions} and~\ref{hyp:exp_weak_mix}. Then, for any \(R<\infty\) and \(W\) complex function depending only on edges at distance at most \(R\) from \(0\) with \(\norm{W}_{\infty}\leq 1\), the function
	\begin{equation*}
		\lambda\mapsto \tilde{\psi}_{\mathrm{RC},q}(\beta,\lambda W)
	\end{equation*}is well defined and is analytic in a neighbourhood of \(0\).
\end{theorem}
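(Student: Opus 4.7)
The plan is to piggy-back on the polymer/coarse-graining machinery that yields Theorem \ref{thm:main}, treating $\lambda W$ as a small local analytic perturbation of the polymer activities. Fix a coarse scale $L\geq R$ large enough that the construction from the proof of Theorem \ref{thm:main} applies: using~\ref{hyp:exp_weak_mix}--\ref{hyp:exp_dec_finite_connexions} together with coupling-from-the-past and the Harel--Spinka implementation of the random cluster Glauber dynamic, one obtains a polymer representation
\begin{equation*}
Z_{E_{\bbT_N},\beta,q}^{\mathrm{RC},\mathrm{per}}=Z_0(\beta)^{|\bbT_N^L|}\sum_{\{\Gamma_i\}}\prod_i \rho(\Gamma_i;\beta),
\end{equation*}
in which $Z_0(\beta)$ is the partition function on a single $B_L$-block with a reference boundary condition, the $\Gamma_i$ range over pairwise compatible elements of $\Gamma_N^L$, and the activities satisfy $|\rho(\Gamma;\beta)|\leq e^{-c|\Gamma|}$.

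Since $W_x$ depends only on edges at distance $\leq R\leq L$ from $x$, the perturbation $\sum_{x\in V_F}W_x(\omega)$ decomposes cleanly across the coarse-graining into contributions, each localized in the union of a box $E_L(y)$ and its immediate $L$-neighbours. Running the same coarse-grained construction on the measure with additional weight $e^{\lambda\sum_x W_x(\omega)}$ yields activities $\rho(\Gamma;\beta,\lambda)$ that coincide with $\rho(\Gamma;\beta)$ up to a multiplicative factor of the form $e^{\lambda \tilde W_\Gamma(\omega)}$, where $\tilde W_\Gamma$ is a bounded function of the configuration on the $L$-neighbourhood of $\Gamma$ satisfying $|\tilde W_\Gamma|\leq c'|\Gamma|$ thanks to $\norm{W}_\infty\leq 1$. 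Consequently each $\rho(\Gamma;\beta,\lambda)$ is entire in $\lambda$ and, for $|\lambda|$ small enough, still obeys $|\rho(\Gamma;\beta,\lambda)|\leq e^{-c''|\Gamma|}$ uniformly in $\Gamma$; the reference weight $Z_0(\beta,\lambda)$ is manifestly entire in $\lambda$.

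With these bounds the Koteck\'y--Preiss criterion produces a cluster expansion for $\log Z_{E_{\bbT_N},\beta,q}^{\mathrm{RC},\mathrm{per}}(\lambda W)$ that converges absolutely and uniformly in $N$ for $\lambda$ in a fixed complex disk around $0$. Dividing by $|V_{E_{\bbT_N}}|$, using translation invariance (periodic boundary conditions) to reduce the sum to clusters anchored at a fixed vertex, and letting $N\to\infty$, the limit $\tilde\psi_{\mathrm{RC},q}(\beta,\lambda W)$ exists and equals an absolutely convergent series of entire functions of $\lambda$; analyticity in a neighbourhood of $0$ then follows from Weierstrass.

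The main obstacle is contained entirely in the polymer representation itself, which is the non-trivial content of Theorem \ref{thm:main} and relies on the coarse-graining required by the non-local random cluster Glauber dynamic. Once that framework is in place, the perturbative step here is genuinely routine: the finite range of $W$ ensures compatibility with the coarse scale, $\norm{W}_\infty\leq 1$ keeps the new activities dominated by the same exponential decay for small $\lambda$, and the analyticity in $\lambda$ passes termwise through the expansion.
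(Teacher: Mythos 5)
Your perturbative superstructure (group the $W_x$ into coarse blocks, note each block factor is $O(\lambda L^d)$ by $\norm{W}_\infty\leq 1$ and $R\leq L$, get activities analytic in $\lambda$ with uniform exponential decay for small $|\lambda|$, run a convergent cluster expansion uniformly in $N$, and extract existence and analyticity of the limit termwise by translation invariance) is exactly the right shape, and it correctly addresses the one point where the pressure argument must change: since the existence of $\tilde{\psi}_{\mathrm{RC},q}(\beta,\lambda W)$ is not known a priori, one cannot simply invoke Vitali as in Lemma~\ref{lem:F_analytic}, and the limit must be read off the expansion itself.

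However, your starting point misstates the input provided by Theorem~\ref{thm:main}. You assert a polymer representation of the \emph{unperturbed} partition function, $Z_{E_{\bbT_N},\beta,q}^{\mathrm{RC},\mathrm{per}}=Z_0(\beta)^{|\bbT_N^L|}\sum\prod_i\rho(\Gamma_i;\beta)$ with $|\rho(\Gamma;\beta)|\leq e^{-c|\Gamma|}$, i.e.\ a contour-like representation of $Z$ at a generic $\beta$ satisfying only weak mixing. No such representation is constructed here, and it does not follow from~\ref{hyp:exp_weak_mix}--\ref{hyp:exp_dec_finite_connexions} or from the coupling of Theorem~\ref{thm:dep_encoding_coupling}: the whole point of the method is to avoid needing one. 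What the proof of Theorem~\ref{thm:main} actually expands is the \emph{ratio} $Z(\beta+z)/Z(\beta)=\FKlaw_{N,\beta}^{\mathrm{per}}\big(\alpha_z^{|\omega|}\big)$; the polymer activities $w_z(\gamma)$ are $P$-expectations in which every term carries at least one factor $f_x=\alpha_z^{|\omega\cap E_L(x)|}-1$, so they are small because the perturbation is small and the dependency clusters have exponential tails (Corollary~\ref{cor:exp_dec_cluster_set}), and they all vanish at $z=0$. In your scheme the $\lambda=0$ activities are nonzero and exponentially small in $|\Gamma|$, which is a genuinely different and unavailable object; relatedly, writing the perturbed activity as $\rho(\Gamma;\beta)\,e^{\lambda\tilde W_\Gamma(\omega)}$ cannot be literally correct, since an activity is a number (an expectation under $P$) while your correction factor still depends on $\omega$ -- the perturbation must be inserted \emph{inside} the $P$-expectation defining the weight. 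The fix is to follow Section~\ref{section:polymer_rep_convergence_cluster_expansion}: write $Z^{\mathrm{RC},\mathrm{per}}_{E_{\bbT_N},\beta,q}(\lambda W)=Z^{\mathrm{RC},\mathrm{per}}_{E_{\bbT_N},\beta,q}\cdot\FKlaw_{N,\beta}^{\mathrm{per}}\big(e^{\lambda\sum_x W_x}\big)$, set $f_y(\omega)=\exp\big(\lambda\sum_{x\in B_L(y)}W_x(\omega)\big)-1$ for $y\in\bbT_N^L$ (supported on $E_L(y)$ and its neighbouring blocks, which is the notationally heavier point since the decoupling identity~\eqref{eq:decoupling} must be applied with these slightly enlarged supports), and then repeat verbatim the Mayer expansion, the factorization through the coupling $P$, the bound of Lemma~\ref{lem:upper_bnd_weights} with $|f_y|\leq e^{|\lambda|(2L+1)^d}-1\leq e^{-a_L}$ for $|\lambda|$ small, and your concluding cluster-expansion/limit argument.
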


Theorem~\ref{thm:GenPressure_analy} is a straightforward generalization of Theorem~\ref{thm:main}. We do not present its detailed proof as it is a simple adaptation of the proof of Theorem~\ref{thm:main} (only section~\ref{section:polymer_rep_convergence_cluster_expansion} needs minor changes that are only notationally heavier).

The proof of Theorem~\ref{thm:Correlation_analy} is a standard adaptation of the one of Theorem~\ref{thm:main}. We only sketch the modifications in section~\ref{sec:sketch_Corr_analy} and leave the details to the interested reader.

\subsection{Comments and related problems}

The present study close the question of analyticity of the pressure for the finite range ferromagnetic Ising model on \(\Zd\). An interesting question is the extension of the results of~\cite{Ott-2019} and of the present work to the Ising model with infinite range interactions. This seems doable by the methods exposed here and in~\cite{Ott-2019} and a modified polymer expansion (see section~\ref{section:polymer_rep_convergence_cluster_expansion}).

On the purely technical side, by carefully recording the dependency of all quantities in \(\beta\) and using monotonicity of Random Cluster measure, one should be able to get a quantitative version of Theorem~\ref{thm:main} which should give an ``explicit'' region of analyticity.

Another related question is the one of analytic continuation: take the Ising model with \(\beta>\betac\). While the results of Isakov~\cite{Isakov-1984}, prevent the extension of \(h\mapsto\psi(\beta,h)\) through \(h=0\), one can ask whether such analytic continuation is possible through a purely imaginary value of the field (the pressure being analytic when the real part of \(h\) is either \(>0\) or \(<0\) by Lee-Yang Theorem).

Another interesting direction is the following: one can notice many similarity in the ways~\cite{Harel+Spinka-2018} construct finitary coding of lattice-FKG random fields and the construction involved in Theorem~\ref{thm:dep_encoding_coupling} (namely: Glauber dynamic and coupling from the past). It would be interesting to formulate and prove some general equivalence statement between existence of finitary coding with certain bounds on the coding cluster volume and analyticity properties of the pressure (which is in a sense what is done in the present paper and~\cite{Ott-2019} for the Random Cluster and Ising models).

Finally, a deep study of the relationship between strong mixing properties and strong analyticity properties was realized by Dobrushin and Shlosman~\cite{Dobrushin+Shlosman-1987}. One can hope for a similar generic study of the relationship between weak mixing and ``soft'' analytic properties (analyticity of bulk quantities). We plan to come back to this issue in a near future.

\section{Glauber dynamic}

\subsection{Graphical representation of Glauber dynamic}

We introduce directly the graphical representation of the Glauber dynamic introduced in~\cite{Schonmann-1994} (which couples the dynamics in all volumes and with all boundary conditions). We refer to~\cite{Martinelli-1999} and to the original paper for details and proof of the unproven statements.

To each edge \(e\in \Ed\), attach a copy of \(\R_-\) with a rate one Poisson point process on it, denoted \(\PPP_e=\{t_{e,1}>t_{e,2}>\cdots\}\). To each point \(t_{e,k}\) of the Poisson point processes, attach a uniform random variable \(U_{e,k}\) on \([0,1]\). All this such that the family \(\{\PPP_e:\ e\in \Ed\}\cup\{U_{e,k}:\ e\in \Ed,k\geq 1\}\) forms an independent family whose law and expectation will be denoted \(\p,\e\). Almost surely, all \(t_{e,k}\)s are distinct, they can thus be ordered in increasing order. For \(\Lambda\subset \Ed\) finite, we will denote \(\PPP_{\Lambda}\) the superposition of \(\PPP_e,e\in\Lambda\).

For an edge \(e=\{i,j\}\) and a configuration \(\omega\in\Omega=\{0,1\}^{\Ed}\) with at most one infinite cluster, define the update rates
\begin{equation*}
p_e(\omega) = \begin{cases}
1-e^{-\beta} & \text{ if } i\xleftrightarrow{\omega\setminus e} j,\\
\frac{e^{\beta} -1}{e^{\beta} -1 + q} & \text{ else.}
\end{cases}
\end{equation*}

Let now \(\Lambda\subset \Ed\) be finite and \(\bar{\eta} = \eta_t\in\Omega^{\R_-}\) be a fixed collection of boundary conditions with at most one infinite cluster in \(\eta_t\setminus \Lambda\). Let \(\omega\in\Omega\) be a fixed starting configuration.

We can now define \(\sigma_{t,\Lambda}^{\omega,\bar{\eta}}\) the configuration after time \(t\) sampled using Glauber dynamic in volume \(\Lambda\) with (evolving) boundary conditions \(\bar{\eta}\) as follows:
\begin{itemize}
	\item Let \(-t<t_{1}<t_{2}<t_{3}<\cdots<t_{M}\leq 0\) be the ordered sequence of arrival times of the collection \((\PPP_e)_{e\in\Lambda}\) in the interval \([-t,0]\) (as \(\Lambda\) and \(t\) are finite, it is almost surely a finite sequence). Denote also \(e_1,e_2,\cdots\) the edges associated to \(t_1,t_2,\cdots\) and \(U_1,U_2,\cdots\) the uniform random variables associated.
	\item Construct \(\sigma_{t,\Lambda}^{\omega,\bar{\eta}}\) as follows:
	\begin{enumerate}
		\item Set \(\sigma^{(0)} = \omega_{\Lambda}\).
		\item For all \(1\leq k\leq M\), construct \(\sigma^{(k)}\) from \(\sigma^{(k-1)}\) by setting
		\begin{equation*}
			\sigma^{(k)}(e) = \begin{cases}
			\mathds{1}_{U_k<p_{e_k}(\sigma^{(k-1)}\eta_t|_{\Lambda^c} )} & \text{ if } e= e_k,\\
			\sigma^{(k-1)}(e) & \text{else.}
			\end{cases}
		\end{equation*}
		\item Set \(\sigma_{t,\Lambda}^{\omega,\bar{\eta}}=\sigma^{(M)}\).
	\end{enumerate}
\end{itemize}

One can notice that for fixed \(t,\Lambda\) and a given realization of the P.P.P.s and of the uniforms, the map \((\bar{\eta},\omega)\mapsto \sigma_{t,\Lambda}^{\omega,\bar{\eta}}\) is deterministic. Moreover, for a fixed \(t,\Lambda\) and for a fixed realization of the P.P.P.s and of the uniform, \(\sigma_{t,\Lambda}^{\omega,\bar{\eta}}\) is non-decreasing in \(\omega\) and \(\bar{\eta}\) (as a simple consequence of the lattice FKG property of the Random Cluster Measure).

We will also need the ``intermediate'' steps of construction: for \(0\leq s\leq t\), define \(\sigma_{t,s,\Lambda}^{\omega,\bar{\eta}}\) in the same way as we defined \(\sigma_{t,\Lambda}^{\omega,\bar{\eta}}\) but using only the updates up to time \(-t+s\) (i.e.: replacing \(t_1<t_2<\cdots<t_M\leq 0\) by \(t_1<t_2<\cdots<t_{M_s}\leq -t+s \)). In particular, \(\sigma_{t,\Lambda}^{\omega,\bar{\eta}} = \sigma_{t,t,\Lambda}^{\omega,\bar{\eta}}\) and \( \sigma_{t,0,\Lambda}^{\omega,\bar{\eta}} = \omega\). We will denote \(1\) (resp. \(0\)) the constant configuration \(1\) (resp. \(0\)) and do the same for the sequence of configurations \(\bar{\eta}\). The ordering mentioned above implies that for any \(\omega,\bar{\eta}\),
\begin{equation*}
	\sigma_{t,s,\Lambda}^{0,0} \leq \sigma_{t,s,\Lambda}^{\omega,\bar{\eta}} \leq \sigma_{t,s,\Lambda}^{1,1}.
\end{equation*}
For \(\Delta\subset\Lambda\), we will denote \(\sigma_{t,s,\Lambda,\Delta}^{\omega,\bar{\eta}}\) the restriction of \(\sigma_{t,s,\Lambda}^{\omega,\bar{\eta}}\) to \(\Delta\). We will omit volumes from notation when considering the infinite volume dynamic and write \(\sigma_{t,s}^{\omega}(\Delta)\) for the configuration in \(\Delta\).

The interest of this procedure is that for any finite volume \(\Lambda\), any (non-evolving) boundary condition \(\bar{\eta}_t=\eta\) and any starting configuration \(\omega\), the law of \(\sigma_{t,\Lambda}^{\omega,\eta}\) converges as \(t\to\infty\) to \(\FKlaw_{\Lambda,\beta,q}^{\eta}\). Moreover, if \(\sigma_{t,s,\Lambda,\Delta}^{0,0}=\sigma_{t,s,\Lambda,\Delta}^{1,1}\) then \(\sigma_{t,s,\Lambda,\Delta}^{0,0}\) is distributed according to the \(\Delta\)-marginal of \(\FKlaw_{\beta,q}\) and is determined by the restriction of the PPP and of the uniforms inside \(\Lambda\times [-t,0]\). When there exists a unique infinite volume random cluster measure, the dynamic also converges to this unique measure (in the sense that the law of the restriction of \(\sigma_{t}^{\omega}\) to any finite volume converges to the associated marginal of \(\FKlaw_{\beta,q}\)). In the case there is a unique infinite volume measure, we will denote \(\sigma(e,t)\) the state of the edge \(e\) at time \(t\):
\begin{equation*}
	\sigma(e,t) = \lim_{s\to\infty} \sigma_{t+s,s}^{\omega}(e).
\end{equation*}
The limit is a.s. well defined and does not depend on \(\omega\) (again, in the uniqueness regime).

A key input to our analysis is

\begin{theorem}
	\label{thm:Glauber_point_to_box_connection}
	Suppose that \(\FKlaw\) has the exponential weak spatial mixing property. Then, there exist \(c>0\), \(\alpha<\infty\), and \( N_0\geq 0\) such that for any \(N\geq N_0\), and \(e = \{0,\mathrm{e}_i\}\),
	\begin{equation}
		\p\big(\sigma_{\alpha N,E_{N}}^{0,0}(e)\neq \sigma_{\alpha N,E_{N}}^{1,1}(e)\big) \leq e^{-cN}.
	\end{equation}
\end{theorem}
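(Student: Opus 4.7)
My plan is to decompose the disagreement probability via a triangle inequality passing through the equilibrium random cluster measures, and to control the three resulting pieces using~\ref{hyp:exp_weak_mix} and a Harel-Spinka-type mixing estimate for the (non-local) Glauber dynamic. The monotone coupling of the graphical construction gives $\sigma^{0,0}_{\alpha N, E_N}(e) \le \sigma^{1,1}_{\alpha N, E_N}(e)$, so that
\begin{equation*}
\p\bigl(\sigma^{0,0}_{\alpha N, E_N}(e) \neq \sigma^{1,1}_{\alpha N, E_N}(e)\bigr) = \e\bigl[\sigma^{1,1}_{\alpha N, E_N}(e)\bigr] - \e\bigl[\sigma^{0,0}_{\alpha N, E_N}(e)\bigr].
\end{equation*}
Inserting $\pm \FKlaw^{\eta}_{E_N,\beta,q}(\sigma_e=1)$ for $\eta\in\{0,1\}$ splits the right-hand side into three nonnegative pieces (nonnegativity coming from FKG for the middle piece and from the monotone convergence of the Glauber dynamic from extremal starts for the two outer ones), each of which must be shown to be at most $e^{-cN}$.

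\textbf{Equilibrium piece.} For $\FKlaw^{1}_{E_N}(\sigma_e=1) - \FKlaw^{0}_{E_N}(\sigma_e=1)$, I fix some $a>1$ and condition on the configuration outside the smaller box $\Lambda_{N/(2a)}\subset E_N$. By lattice FKG, the conditional law inside is stochastically sandwiched between $\FKlaw^{0}_{\Lambda_{N/(2a)}}$ and $\FKlaw^{1}_{\Lambda_{N/(2a)}}$, whose marginals at the central edge $e$ differ by at most $e^{-cN}$ by~\ref{hyp:exp_weak_mix} applied at scale $N/(2a^2)$.

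\textbf{Dynamics pieces.} For $\FKlaw^{0}_{E_N}(\sigma_e=1) - \e[\sigma^{0,0}_{\alpha N,E_N}(e)]$ and $\e[\sigma^{1,1}_{\alpha N,E_N}(e)] - \FKlaw^{1}_{E_N}(\sigma_e=1)$, I would invoke~\cite{Harel+Spinka-2018}, which extends the analysis of~\cite{Martinelli+Olivieri-1994} to the non-local random cluster Glauber dynamic: under~\ref{hyp:exp_weak_mix} together with~\ref{hyp:exp_dec_finite_connexions}, the dynamic admits a coupling from the past with exponentially decaying tails on the radius of dependence around a given edge. Therefore, for $\alpha$ a sufficiently large (constant) multiple of the inverse decay rate, with probability at least $1-e^{-cN}$ the ``coding region'' needed to determine $\sigma(\cdot,0)$ at $e$ under the infinite-volume dynamic is contained in $E_N \times [-\alpha N,0]$, and on that event both finite-volume dynamics (from either extremal start) coincide at $e$ with the infinite-volume one, hence with its equilibrium value. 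This yields the desired $e^{-cN}$ bound on each of the two dynamics pieces.

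\textbf{Main obstacle.} The principal technical difficulty is the non-locality of the random cluster Glauber dynamic: the update rate at an edge depends on global connectivity, which prevents the direct use of the block-dynamics or disagreement-percolation arguments available for the Ising heat bath. This is exactly the point at which~\ref{hyp:exp_dec_finite_connexions} enters: the coarse-graining at a mesoscopic scale $L$ used by Harel-Spinka exploits either the absence of large finite clusters (in the non-percolative regime~\ref{hyp:non_perco}) or the presence of a single giant cluster per block (in the percolative regime~\ref{hyp:perco}), so that the relevant connectivity information is recoverable from coarse-grained local data up to exponentially rare defects. Arranging the parameters so that a time $\alpha N$ \emph{linear} in $N$ suffices — rather than merely some polynomial or poly-logarithmic time — is the delicate point; the rest of the argument is a formal triangle inequality plus the two already-cited inputs.
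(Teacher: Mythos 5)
Your proposal is essentially the paper's approach: the paper gives no self-contained argument for Theorem~\ref{thm:Glauber_point_to_box_connection} but simply defers to the Martinelli--Olivieri scheme and its non-local extension in~\cite{Harel+Spinka-2018}, which is exactly where your ``dynamics pieces'' place the entire burden (your monotone triangle-inequality scaffolding and the use of~\ref{hyp:exp_weak_mix} for the equilibrium piece are routine additions). Just note that the step ``coding region containment implies the two finite-volume dynamics agree at \(e\)'' is not automatic for the non-local rates and must be taken in the sandwich form provided by~\cite{Harel+Spinka-2018}, i.e.\ it is part of the cited input rather than something your sketch establishes independently.
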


This proof follows closely the one of~\cite{Martinelli+Olivieri-1994} for the Ising model and the proof of a more general statement can be found in~\cite{Harel+Spinka-2018}.

\section{Coupling construction and properties}

The goal of this section is the proof of
\begin{theorem}
	\label{thm:dep_encoding_coupling}
	Suppose that~\ref{hyp:exp_weak_mix} and~\ref{hyp:exp_dec_finite_connexions} are satisfied. Then there exists \(L_0\geq 0\) such that for any \(L\geq L_0\), one can construct a probability measure \(P\) on \((\Gamma_{N}^L)^{\bbT_N^L} \times \{0,1\}^{E_{\bbT_N}}\) with the following properties: let \((C,\omega)\sim P\),
	\begin{itemize}
		\item \(\omega\sim \FKlaw_{\bbT_N,\beta,q}\).
		\item For all \(x\in \bbT_N^L\), \(C_x\ni x\).
		\item For any \(\Delta_1,\Delta_2\subset \bbT_N^L\), \(f,g\) functions supported on \(\Delta_1,\Delta_2\) respectively, and \(\Delta_i \subset D_i\subset \bbT_N^L, i=1,2\) with \(D_1\cap D_2=\varnothing\)
		\begin{equation}
		\label{eq:decoupling}
			P(fg\mathds{1}_{C_{\Delta_1}=D_1}\mathds{1}_{C_{\Delta_2}=D_2}) = P(f\mathds{1}_{C_{\Delta_1}=D_1})P(g\mathds{1}_{C_{\Delta_2}=D_2})
		\end{equation}where \(C_{\Delta} = \bigcup_{x\in \Delta} C_x\).
		\item There exist \(c>0,c'\geq 0\) such that for any \(x\in \bbT_N^L\)
		\begin{equation}
		\label{eq:exp_dec_dep_clusters}
			P(|C_x|\geq l)\leq c'e^{-cl},
		\end{equation}moreover, \(c\to\infty\) as \(L\to\infty\).
	\end{itemize}
\end{theorem}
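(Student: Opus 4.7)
The plan is to realize $P$ as the joint law of a coupling-from-the-past construction. Sample the Poisson processes and uniforms $\{\PPP_e, U_{e,k}\}$ under $\p$; by Theorem~\ref{thm:Glauber_point_to_box_connection} together with~\ref{hyp:exp_weak_mix}, the infinite-volume Glauber dynamic converges a.s.\ to a configuration $\omega$ of law $\FKlaw_{\bbT_N,\beta,q}$. Simultaneously attach to each $x \in \bbT_N^L$ a random connected subgraph $C_x \ni x$ of $\bbT_N^L$, built so that the pair $\bigl(C_x,\, \omega|_{\bigcup_{y\in C_x} E_L(y)}\bigr)$ is measurable with respect to the restriction of the Poisson/uniform data to a bounded enlargement of $\bigcup_{y\in C_x} B_L(y)$. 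Granting this locality, the decoupling~\eqref{eq:decoupling} is immediate: on $\{C_{\Delta_1}=D_1\}\cap\{C_{\Delta_2}=D_2\}$ with $D_1\cap D_2=\varnothing$, an absorbing rule in the construction (described below) ensures the relevant PPP/uniform data for the two sides live in disjoint sub-regions of $\Ed\times\R_-$, and independence follows from the product structure of $\p$.

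The construction of $C_x$ is iterative, in the spirit of~\cite{Harel+Spinka-2018}. Fix $\alpha$ large as in Theorem~\ref{thm:Glauber_point_to_box_connection} and set $T=\alpha L$. Call $y\in\bbT_N^L$ \emph{good} if the restricted dynamics $\sigma^{0,0}_{T, E_{B_{\alpha L}(y)}}$ and $\sigma^{1,1}_{T, E_{B_{\alpha L}(y)}}$ agree on $E_L(y)$; Theorem~\ref{thm:Glauber_point_to_box_connection} yields $\p(y\text{ not good})\leq e^{-cL}$. Starting from $S=\{x\}$, iteratively enlarge $S$ by appending any $y\in\bbT_N^L$ that is either not good or whose determining box $B_{\alpha L}(y)$ meets a coarse site adjacent to $S$ but not yet in $S$; stop when $S$ is stable and set $C_x$ to the connected component of $x$ in the induced graph. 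This absorbing rule is precisely what guarantees the locality property stated in the first paragraph. A Peierls-type count over connected animals of size $\ell$ in the coarse lattice then yields $P(|C_x|\geq\ell)\leq c' e^{-cL\ell}$, which is~\eqref{eq:exp_dec_dep_clusters} with rate diverging as $L\to\infty$.

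The main technical obstacle, and the point at which the Ising argument from~\cite{Ott-2019} must be genuinely modified, is the non-locality of the random cluster Glauber rates: $p_e(\omega)$ depends on whether the endpoints of $e$ are connected in $\omega\setminus e$, a priori a long-range condition that would invalidate the claim that $\sigma_{T, E_{B_{\alpha L}(y)}}^{0,0}$ is measurable with respect to Poisson data inside $B_{\alpha L}(y)$. Hypothesis~\ref{hyp:exp_dec_finite_connexions} is the tool used to truncate this dependence. In the non-percolative subcase~\ref{hyp:non_perco}, all clusters intersecting $B_{\alpha L}(y)$ have diameter at most $L/100$ with probability $1-e^{-cL}$, so the connectivity check in $p_e$ can be performed locally. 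In the percolative subcase~\ref{hyp:perco}, a Pisztora-style coarse-graining identifies a unique macroscopic cluster in each typical coarse block (again with probability $1-e^{-cL}$), reducing the connectivity check to ``does $e$ attach to the macroscopic cluster''. Incorporating the associated failure events into the \emph{not good} criterion preserves the Peierls bound on $|C_x|$ as well as the decoupling argument, and completes the construction.
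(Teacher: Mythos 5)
Your strategy is the right one and is, in outline, the paper's: generate \(\omega\) by the graphical representation/coupling-from-the-past, call a coarse block good when the \(0\)- and \(1\)-dynamics coalesce there (Theorem~\ref{thm:Glauber_point_to_box_connection}), use~\ref{hyp:exp_dec_finite_connexions} to localise the connectivity query in the update rates, dominate the bad set by sparse Bernoulli percolation to get the tail~\eqref{eq:exp_dec_dep_clusters}, and get~\eqref{eq:decoupling} from the product structure of \(\p\). The genuine gap is that your coarse-graining is purely spatial with a fixed time horizon \(T=\alpha L\). When a block is \emph{bad}, coalescence has not occurred in the window \([-T,0]\), so the value of \(\omega\) on that block is \emph{not} a function of the Poisson/uniform data in any fixed space-time neighbourhood: it still depends on the state at time \(-T\), hence on updates arbitrarily far back in time and, because information propagates, on correspondingly larger spatial regions. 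Thus the measurability claim of your first paragraph (that \((C_x,\omega\restriction_{\cup_{y\in C_x}E_L(y)})\) is determined by data in a bounded enlargement of \(\cup_{y\in C_x}B_L(y)\)) fails exactly on the event that \(C_x\) contains bad blocks, and with it both the decoupling~\eqref{eq:decoupling} and the identification of the conditional law of \(\omega\). This is why the paper works on the space-time lattice \(\bbL_N^{L,K}=\bbT_N^L\times K\bbZ_+\): goodness is a property of space-time blocks (with agreement required for \emph{all} \(s\in[K/2,3K/2]\), so that consecutive time layers chain), the cluster of bad sites is explored in space-time until it is enclosed by a \emph{decoupling surface} of good blocks, \(C_x\) is the \emph{spatial projection} of that space-time cluster, and the Peierls/Liggett--Schonmann--Stacey bound is run upstairs, the projection inheriting the exponential tail. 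Your exploration rule would have to be recast in this space-time form (as written, ``append any \(y\) whose determining box meets a coarse site adjacent to \(S\)'' never stabilises anyway).

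A second, smaller gloss concerns the percolative case~\ref{hyp:perco}. Knowing that each typical block contains a unique macroscopic cluster is not by itself enough to evaluate \(p_e\) locally: to decide whether the endpoints of \(e\) are connected in \(\omega\setminus e\) one needs the whole surrounding shell of good blocks to carry a single crossing cluster to which every large cluster attaches, so that the query is resolved inside the screened region. This is precisely the paper's definition of open-good (a cluster connecting all sides of \(E_{3L/2}(x^{\rms})\) and at most one cluster of diameter \(\geq L/100\), at every time \(s\) in the window) combined with the decoupling-surface lemma; your per-block Pisztora event should be upgraded to this annulus statement, and it must hold at all update times in the window, not only at the terminal time. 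The non-percolative case~\ref{hyp:non_perco} as you describe it is fine, modulo the same remark about intermediate times.
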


It will be convenient to re-formulate the last item in the following form
\begin{corollary}
	\label{cor:exp_dec_cluster_set}
	With the same hypotheses and notations as Theorem~\ref{thm:dep_encoding_coupling}, there exists \(a\geq 0\) such that for any \(\Delta\subset\bbT_N^L\),
	\begin{equation}
		P(|C_{\Delta}| \geq l) \leq e^{-cl}e^{a |\Delta|}.
	\end{equation}
\end{corollary}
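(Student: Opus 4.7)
The strategy is to derive the corollary from an exponential-moment bound via Markov's inequality. Fix $t\in(0,c)$; Markov gives
\[
P\bigl(|C_{\Delta}|\geq l\bigr) \leq e^{-tl}\,P\bigl(e^{t|C_{\Delta}|}\bigr),
\]
so it suffices to establish a bound of the form $P\bigl(e^{t|C_{\Delta}|}\bigr)\leq K^{|\Delta|}$ for some finite $K=K(t)$ uniform in $\Delta$. The rate $c$ in the statement is then replaced by this $t$ (a harmless rescaling of constants) and $a=\log K$.

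I would establish the MGF bound by induction on $|\Delta|$, after the pointwise reduction $|C_{\Delta}|\leq\sum_{x\in\Delta}|C_x|$. The base case $|\Delta|=1$ is immediate from~\eqref{eq:exp_dec_dep_clusters}: $P(e^{t|C_x|})\leq K_0<\infty$ for any $t<c$. For the inductive step, fix $x\in\Delta$, set $\Delta'=\Delta\setminus\{x\}$, and split
\[
P\bigl(e^{t(|C_x|+|C_{\Delta'}|)}\bigr) = P\bigl(e^{t(|C_x|+|C_{\Delta'}|)};\,C_x\cap C_{\Delta'}=\varnothing\bigr) + P\bigl(e^{t(|C_x|+|C_{\Delta'}|)};\,C_x\cap C_{\Delta'}\neq\varnothing\bigr).
\]
The disjoint contribution factorises via the decoupling~\eqref{eq:decoupling} applied with $\Delta_1=\{x\}$ and $\Delta_2=\Delta'$: summing over pairs $(D_1,D_2)$ with $D_1\cap D_2=\varnothing$ bounds it by $P(e^{t|C_x|})\,P(e^{t|C_{\Delta'}|})\leq K_0\,K^{|\Delta'|}$ using the inductive hypothesis.

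The overlap contribution is the main technical step. I would use the union bound $\mathds{1}_{C_x\cap C_{\Delta'}\neq\varnothing}\leq\sum_{x'\in\Delta'}\mathds{1}_{C_x\cap C_{x'}\neq\varnothing}$ combined with the elementary observation that $C_x\cap C_{x'}\neq\varnothing$ forces $|C_x|+|C_{x'}|\geq d(x,x')$, where $d$ denotes the graph distance on $\bbT_N^L$. Hölder's inequality with conjugate exponents $(p,q)$ satisfying $tp<c$ and $tq$ close to $t$ then separates the factor involving $|C_x|$---controlled by the truncated tail $P(e^{tp|C_x|};\,|C_x|\geq d/2)\leq c''e^{-(c-tp)d/2}$---from the factor involving $|C_{\Delta'}|$, which is absorbed into the inductive bound at the slightly-inflated exponent $tq$. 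Summing over $x'\in\Delta'$ stays finite by summability of $\sum_{y}e^{-\alpha d(x,y)}$ on the lattice. The main obstacle is precisely the choice of $(p,q)$ keeping both constraints compatible while controlling the auxiliary constant at exponent $tq$; this restricts $t$ to a small sub-interval $(0,t_0)\subset(0,c)$, but taking $K$ sufficiently large absorbs the overlap contribution and closes the induction, yielding $P(e^{t|C_\Delta|})\leq K^{|\Delta|}$ and hence the corollary.
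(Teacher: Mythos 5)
There is a genuine gap in the inductive step, and it is not one that can be absorbed into constants. Your H\"older split bounds the overlap term by a factor \(P\bigl(e^{tq|C_{\Delta'}|}\bigr)^{1/q}\) with \(q>1\): the step that proves the moment bound at exponent \(t\) consumes the inductive hypothesis at the strictly larger exponent \(tq\). Iterating over the \(|\Delta|\) removals you need exponents \(tq, tq^2,\dots,tq^{|\Delta|}\), all of which must stay below \(c\); the only way to arrange this is to take \(q-1\lesssim 1/|\Delta|\), which forces the conjugate exponent \(p\gtrsim|\Delta|\) and hence, through your constraint \(tp<c\), \(t\lesssim c/|\Delta|\). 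What comes out is a tail bound whose rate degrades like \(c/|\Delta|\), whereas the entire content of the corollary is that \(c\) and \(a\) are uniform in \(\Delta\): in Lemma~\ref{lem:upper_bnd_weights} it is applied with \(\Delta=A\subset\gamma\) and \(l=|\gamma|\), where \(|A|\) can be of order \(|\gamma|\), so a \(\Delta\)-dependent rate gives nothing and the cluster expansion no longer converges. A secondary (repairable but exponent-inflating) omission: \(C_x\cap C_{x'}\neq\varnothing\) only forces \(\max(|C_x|,|C_{x'}|)\geq d(x,x')/2\); your truncated tail treats the case where \(C_x\) is the long cluster, while in the complementary case the indicator \(\mathds{1}_{|C_{x'}|\geq d(x,x')/2}\) sits inside the \(C_{\Delta'}\)-moment and extracting it costs a further H\"older, inflating the exponent yet again. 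Finally, replacing the rate \(c\) by \(t\) is harmless only if \(t\) can be chosen uniformly in \(\Delta\) and still growing with \(L\), which your argument does not deliver.

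More fundamentally, \eqref{eq:decoupling} and \eqref{eq:exp_dec_dep_clusters} give essentially no control on the positive correlations of \emph{overlapping} clusters, which is precisely what your overlap term must rule out: a chain of \(|\Delta|\) strongly correlated clusters of size \(l/|\Delta|\) each would produce \(|C_\Delta|\geq l\) at a cost as small as \(e^{-cl/|\Delta|}\) without obviously violating either property. The proof the paper points to (Lemma 4.4 of~\cite{Ott-2019}) therefore does not argue from the abstract coupling but from the construction of the \(C_x\): \(C_\Delta\) is contained in the \(2\)-neighbourhood of the union of the bad-block clusters meeting \(\Delta\), badness is a finite-range field dominated by Bernoulli site percolation of parameter \(p_L\) (Lemma~\ref{lem:domination_by_Bernoulli}), and a Peierls count over the connected components of this union, each anchored at a point of \(\Delta\) (anchoring entropy at most \(2^{|\Delta|}\), which is where \(e^{a|\Delta|}\) comes from), pays \(p_L\) once per site of the union rather than once per cluster \(C_x\). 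This yields \(e^{-cl}e^{a|\Delta|}\) with \(c=c(L)\to\infty\), uniformly in \(\Delta\) --- it is the argument of Lemma~\ref{lem:rad_to_vol} run with \(|\Delta|\) seeds instead of one. If you wish to keep an MGF formulation, run it on the dominating Bernoulli field, where genuine independence replaces \eqref{eq:decoupling} and the induction closes at a fixed exponent.
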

\begin{proof}
	See~\cite[Lemma 4.4]{Ott-2019}.
\end{proof}

\begin{remark}
	Be careful, \(a\) depends on \(L\). The proof gives \(c= O(L)\) but we will need only the weaker statement.
\end{remark}

\subsection{Information clusters}

As in~\cite{Ott-2019}, the centre of the present analysis is a coarse graining procedure of the ``information cluster'' of space-time regions in the graphical representation of Glauber dynamic recalled in the previous section. Let \(K>0\) be a positive real (we will take \(K=\alpha L\) for some fixed \(\alpha\) sufficiently large) and \(L\geq 0\) be an integer. For \(N\) such that \(2N+1\) is divisible by \(2L+1\), define the ``coarse-grained space-time'': \(\bbL_N^{L,K}\) to be the subset of \(\bbT_N\times \R_{+}\)
\begin{equation*}
\bbL_N^{L,K}  = \bbT_N^L\times K\bbZ_+.
\end{equation*}We will denote \(x=(x^{\rms},x^{\rmt})\in\bbL_{N}^L\) with \(x^{\rms}=(x^{\rms}_1,\cdots,x^{\rms}_d)\in \bbT_N^L\) and \(x^{\rmt}\in K\bbZ_+\). In particular, the semi-open boxes
\begin{equation*}
E_{L,K}(x) =  E_L(x^{\rms})\times[x^{\rmt},x^{\rmt}+K),
\end{equation*}partition the space \(E_{\bbT_N}\times\R_+\). We equip \(\bbL_N^{L,K}\) with a graph structure by putting an edge between \(x\) and \(y\) if one of the two following occurs
\begin{itemize}
	\item \(x^{\rmt} = y^{\rmt}\) and \(x^{\rms},y^{\rms}\) are neighbours in \(\bbT_N^L\).
	\item \(x^{\rms}=y^{\rms}\) and \(|x^{\rmt} - y^{\rmt}| = K\).
\end{itemize}
We will say that a point \(x\in \bbL_N^{L,K}\) is \emph{open-good} if
\begin{itemize}
	\item \(\sigma_{x^{\rmt}+3K/2,s,E_{2L}(x^{\rms}),E_{3L/2}(x^{\rms})}^{0,0} = \sigma_{x^{\rmt}+3K/2,s,E_{2L}(x^{\rms}),E_{3L/2}(x^{\rms})}^{1,1}\) for all \(s\in[K/2, 3K/2]\). In words: the configuration at all times between \(-x^{\rmt} - 3 K/2\) and \( -K/2\) in \(E_{3L/2}(x^{\rms})\) is sampled uniformly over boundary conditions on the outside of \(E_{2L}(x^{\rms})\) and initial conditions at time earlier than \(-x^{\rmt}-3K/2\). In other words: for all \(s\in[x^{\rmt}- 3 K/2, x^{\rmt}]\) and \(e\in E_{3L/2}(x^{\rms})\), \(\sigma(e,s )\) does not depend on the PPP outside \(E_{2L,3K/2}(x)\) nor on the associated uniforms.
	\item For all \(s\in[K/2, 3K/2]\), \(\sigma_{x^{\rmt}+3K/2,s,E_{2L}(x^{\rms}),E_{3L/2}(x^{\rms})}^{0,0}\) contains a cluster connecting all sides of \(E_{3L/2}(x^{\rms})\) and at most one cluster of diameter greater or equal to \(L/100\).
\end{itemize}

We will say that a point \(x\in \bbL_N^{L,K}\) is \emph{close-good} if it satisfies the first condition above and for all \(s\in[K/2, 3K/2]\), \(\sigma_{x^{\rmt}+3K/2,s,B_{2L}(x^{\rms}),B_{3L/2}(x^{\rms})}^{0,0}\) contains no cluster of diameter greater or equal to \(L/100\).

Now, explore the properties of good boxes. The idea is that the uniform sampling part guaranties locality of the information needed to construct the configurations (as in~\cite{Ott-2019} for the Ising model) while the largest cluster part forces locality of the information needed to construct the configuration inside a ``surface'' of good blocs (we will use open-good boxes for \(\beta>\betac\) and close ones for \(\beta<\betac\)). We now make the locality statement precise in the next two lemmas. The second one is the main difference between the procedure of~\cite{Ott-2019} and the present one: it is the property handling the non-locality of the Glauber dynamic for the Random Cluster model.
\begin{lemma}
	Let \(C\subset\bbL_N^{L,K}\) be finite. Suppose that all \(x\in C\) are good (open or close). Then, for all \((e,t)\in \bigcup_{x\in C} E_{3L/2,K}(x)  \), \(\sigma(e,t)\) does not depend on the PPP and the uniforms outside \(\bigcup_{x\in C} E_{2L,3K/2}(x)\).
\end{lemma}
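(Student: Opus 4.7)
The strategy is to reduce the lemma to the single-box assertion implicit in the first defining condition of a good point and then take a union over $x \in C$.

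For a single good $x$, the goal is to show that $\sigma(e,t)$ at any $(e,t) \in E_{3L/2,K}(x)$ is measurable with respect to the PPPs and uniforms inside $E_{2L,3K/2}(x)$ alone. The good property already gives this for the two extremal finite-volume dynamics: both $\sigma_{x^{\rmt}+3K/2,s,E_{2L}(x^{\rms})}^{0,0}$ and $\sigma_{x^{\rmt}+3K/2,s,E_{2L}(x^{\rms})}^{1,1}$ use only these PPPs and uniforms, and by assumption they coincide on $E_{3L/2}(x^{\rms})$ for all $s\in[K/2,3K/2]$. To extend this equality to the infinite-volume dynamics, I would use the monotonicity of the graphical construction in both initial configuration and (evolving) boundary condition: for every $\omega\in\{0,1\}^{\Ed}$ and every boundary trajectory $\bar\eta$,
\begin{equation*}
\sigma_{x^{\rmt}+3K/2,s,E_{2L}(x^{\rms})}^{0,0} \le \sigma_{x^{\rmt}+3K/2,s,E_{2L}(x^{\rms})}^{\omega,\bar\eta} \le \sigma_{x^{\rmt}+3K/2,s,E_{2L}(x^{\rms})}^{1,1}.
\end{equation*}
Viewing the infinite-volume dynamics restricted to $E_{2L}(x^{\rms})$ over the relevant time interval as a finite-volume dynamics in $E_{2L}(x^{\rms})$ whose initial condition is the infinite-volume state at time $-(x^{\rmt}+3K/2)$ and whose evolving boundary condition is the infinite-volume state on $\Ed\setminus E_{2L}(x^{\rms})$, it is sandwiched between the two extremes and therefore coincides with them on $E_{3L/2}(x^{\rms})$ for $s\in[K/2,3K/2]$. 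Consequently $\sigma(e,t)$ on $(e,t)\in E_{3L/2,K}(x)$ is a function of the PPPs and uniforms in $E_{2L,3K/2}(x)$.

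For the lemma itself, any $(e,t)\in \bigcup_{x\in C} E_{3L/2,K}(x)$ lies in some $E_{3L/2,K}(x_0)$ with $x_0\in C$ good; the previous paragraph then shows that $\sigma(e,t)$ depends only on the PPPs and uniforms in $E_{2L,3K/2}(x_0)\subset \bigcup_{x\in C}E_{2L,3K/2}(x)$, which is the claim. The only genuine step is the sandwiching argument that promotes the finite-volume equality in the definition of good to the infinite-volume dynamics $\sigma(\cdot,\cdot)$; once this is in hand the union statement is automatic, because the individual slabs $E_{3L/2,K}(x)$ are resolved without any interaction between distinct good boxes.
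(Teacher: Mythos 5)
Your argument is correct and is essentially the paper's: the first defining property of a good box (the agreement of the $0,0$ and $1,1$ finite-volume dynamics on $E_{3L/2}(x^{\rms})$ over the time window, promoted to the infinite-volume dynamics by the monotone sandwiching you spell out) gives locality of $\sigma(e,t)$ within each single box $E_{2L,3K/2}(x)$, and the statement for the union follows by inclusion, with no interaction between distinct boxes needed. The paper merely compresses the sandwiching step into the reformulation given in the definition of a good point, so your write-up just makes that standard step explicit.
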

\begin{proof}
	If \(x\) is a good point, the state of an edge \(e\in E_{3L/2}(x^{\rms})\) at time \(t\in[-x^{\rmt}-K,-x^{\rmt}]\) does not depend on the PPP and uniforms outside of \(E_{2L,3K/2}\). Inclusion of event give the result.
\end{proof}

We say that \(S\subset\bbL_N^{L,K}\) finite is a \emph{decoupling surface} if
\begin{itemize}
	\item \(S\) is connected.
	\item Either all \(x\in S\) are open-good or all \(x\in S\) are close-good.
	\item \(\bbL_N^{L,K}\setminus S\) is composed of star-connected components, exactly one infinite, and some finite, whose union is denoted \(A(S)\) (possibly empty), such that \(S\cup A(S)\) is connected.
\end{itemize}

We denote
\begin{gather*}
S'= \bigcup_{x\in S} E_{3L/2,K}(x),\quad S'_t=\{(e,t)\in S' \},\quad \bar{S}' = \bigcup_{x\in S} E_{2L,3K/2}(x)\\
\mathring{S} = \bigcup_{x\in S\cup A(S)} E_{L,K}(x),\quad \mathring{S}_t=\{(e,t)\in\mathring{S} \},\\
\bar{S} = \bigcup_{x\in S\cup A(S)} E_{2L,3K/2}(x).
\end{gather*}

\begin{lemma}
	Suppose \(S\) is a decoupling surface. Then, the state of an edge \(e\) at time \(-t\) such that \((e,t)\in \mathring{S}\) does not depend on the PPP and uniforms outside of \(\bar{S}\).
\end{lemma}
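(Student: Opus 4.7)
The plan is to split $\mathring S$ into the ``surface part'' $S' = \bigcup_{x \in S} E_{3L/2,K}(x)$ and the ``interior part'' $\mathring S \setminus S' \subset \bigcup_{y \in A(S)} E_{L,K}(y)$, and handle them separately. For $(e,t) \in S'$ the previous lemma, applied to $S$ itself (which consists entirely of good points), gives directly that $\sigma(e,-t)$ is measurable with respect to PPPs and uniforms in $\bar S' = \bigcup_{x \in S} E_{2L,3K/2}(x) \subset \bar S$. So the substantive case is $(e,t) \in E_{L,K}(y)$ for some $y \in A(S)$.

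For this case I would exploit the closed-surface nature of $S$: because $A(S)$ is a union of \emph{finite} star-connected components of $\bbL_N^{L,K} \setminus S$, every path in $\bbL_N^{L,K}$ from $A(S)$ to the unique infinite component of the complement must cross $S$, and $A(S)$ is bounded in the temporal direction as well. Concretely, I would couple two realisations of $(\PPP,U)$ that agree on $\bar S$ but are arbitrary outside, and show that the infinite-volume dynamic $\sigma(e,-t)$ produces the same value in both realisations for every $(e,t)\in\mathring S$. On $S'$ this is Step~1. For $(e,t)$ in an interior block, I would argue by descending induction on the coarse-grained time coordinate $y^{\rmt}$ (starting from the largest $y^{\rmt}$ appearing in $S\cup A(S)$, since there the block lies below all of $S$ and is directly shielded by a good box of $S$): every update rate $p_{e'}$ invoked in producing $\sigma(e,-t)$ is computed at a space-time point inside $\bar S$, and the connectivity query $i \xleftrightarrow{\omega\setminus e'} j$ underlying $p_{e'}$ can be resolved using only the configuration inside $\bar S$, because any potential connection between $A(S)$ and the complement of $V_{\bar S}$ must pass through an edge in $S'$ whose state has already been fixed by the earlier induction step (or by Step~1).

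The main obstacle is precisely the non-locality of the Glauber dynamic for the random cluster model: $p_{e'}$ depends on the global connectivity of $\omega \setminus e'$, not on any finite neighbourhood of $e'$, so a direct causal-cone argument of the type used in~\cite{Ott-2019} for Ising is not available. This is where the first bullet of the \emph{good} condition does the real work: the coincidence $\sigma_{\dots,E_{2L}(x^{\rms}),E_{3L/2}(x^{\rms})}^{0,0} = \sigma_{\dots}^{1,1}$ together with monotonicity of the graphical dynamic in $(\omega,\bar\eta)$ forces the configuration on $E_{3L/2}(x^{\rms})$ during the window $s\in[K/2,3K/2]$ to coincide for \emph{every} choice of evolving boundary data and initial condition outside $E_{2L,3K/2}(x)$. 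This is the ``effective locality'' that replaces the genuine locality of Ising updates; combined with the geometric enclosure provided by $S$, it is exactly what is needed to propagate the shielding from $S'$ to the whole of $\mathring S$ and conclude.
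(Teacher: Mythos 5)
Your reduction to the two cases $(e,t)\in S'$ versus interior blocks matches the paper, and you correctly identify the non-locality of the update rates as the crux. But your resolution of that crux has a genuine gap: you claim the connectivity query for an interior edge $e'=\{i,j\}$ can be resolved from the configuration inside $\bar S$ ``because any potential connection between $A(S)$ and the complement of $V_{\bar S}$ must pass through an edge in $S'$ whose state has already been fixed''. Knowing the states of the edges in $S'$ is not enough: a path realizing $i\leftrightarrow j$ in $\omega\setminus e'$ may exit the interior through open edges of $S'$, travel through edges \emph{outside} $\bar S$ (whose states are not determined by the PPPs and uniforms in $\bar S$), and re-enter through $S'$ elsewhere; whether two clusters touching the surface from the inside get hooked together in the exterior is exactly the information your argument does not control. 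The first bullet of the good condition, which you single out as doing ``the real work'', only fixes the states of the edges in $S'$ from the local randomness --- that is already the content of the previous lemma --- and it does not localize the connectivity query.

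The missing ingredient is the second bullet of the good condition, which is the one the paper's proof uses and which is the whole reason for distinguishing open-good and close-good boxes (and for hypothesis~\ref{hyp:exp_dec_finite_connexions}). At the relevant time the configuration restricted to $S'_t$ forms an annulus surrounding $e'$: in the close case it contains no cluster of diameter at least $L/100$, so no open path can cross it and the query reduces to connectivity inside $\mathring{S}_t$; in the open case it contains a cluster crossing each block and at most one cluster of large diameter, so any inside--outside connection must merge with that unique crossing cluster, and again $i\leftrightarrow j$ in $\omega\setminus e'$ holds if and only if it holds using only edges of $\mathring{S}_t$. Your descending-induction-in-time scaffolding is harmless (the paper leaves the analogous bookkeeping implicit), but without this cluster-structure argument the step from ``states on $S'$ are known'' to ``update rates in the interior are determined'' fails, and your sketch never invokes the second bullet at all.
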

\begin{proof}
	Fix a realization of the PPP and the uniforms such that \(S\) is a decoupling surface and let \((e,t)\in \mathring{S}\). If \((e,t)\in S'\), there is noting to check by the first property of good sites.
	
	Otherwise, one only needs to check that the update rate \(p_e(\omega)\) for the edge \(e\) at time \(-t\) does not depend on on the PPP and uniforms outside of \(\bar{S}\). The second property of good sited implies that at time \(-t\) the fact that the two endpoints of \(e\) are connected is determined by the state of the edges inside \(\mathring{S}_t\) (the configuration at time \(-t\) contains an annulus in \(S'_t\) surrounding \(e\) crossed by at most one cluster). Moreover, the state of the edges in \(S'\) is determined by the PPP and uniforms inside \(\bar{S}'\) by the first property of good sites. The state of \((e,t)\) is therefore independent of the PPP and uniforms outside of \(\bar{S}\).
\end{proof}

\subsection{Comparison with Bernoulli percolation}

From now on, we work with \(K=\alpha L\), \(\alpha\) taken large enough (e.g. as in Theorem~\ref{thm:Glauber_point_to_box_connection}).

\begin{lemma}
	\label{lem:good_close_whp}
	If~\ref{hyp:exp_weak_mix} and~\ref{hyp:non_perco} are satisfied, there exists \(c>0,L_0\geq 0\) such that for any \(L\geq L_0\),
	\begin{equation}
		\p(0 \textnormal{ is good-close}) \geq 1-e^{-cL}.
	\end{equation}
\end{lemma}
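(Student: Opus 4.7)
\emph{The plan is to} decompose the event $\{0 \text{ is good-close}\}$ as $A \cap B$, where $A$ is the coupling condition $\sigma_{3K/2,s,E_{2L}(0),E_{3L/2}(0)}^{0,0} = \sigma_{3K/2,s,E_{2L}(0),E_{3L/2}(0)}^{1,1}$ for all $s \in [K/2, 3K/2]$, and $B$ is the condition that $\sigma_{3K/2,s,B_{2L}(0),B_{3L/2}(0)}^{0,0}$ contains no cluster of diameter $\geq L/100$ for all such $s$. The claim reduces via $\p((A \cap B)^c) \leq \p(A^c) + \p(B^c)$ to bounding each complement by $e^{-cL}$, calling on \ref{hyp:exp_weak_mix} (through Theorem~\ref{thm:Glauber_point_to_box_connection}) for $A$ and on \ref{hyp:non_perco} for $B$.

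For $B$, monotonicity of the graphical construction implies that the law of $\sigma_{3K/2,s,E_{2L}(0)}^{0,0}$ is stochastically increasing in $s$ and dominated by the reversible measure $\FKlaw_{E_{2L}(0),\beta,q}^{0}$, which is in turn dominated by $\FKlaw_{E_{2L}(0),\beta,q}^{1}$ via lattice FKG. Since ``$B_{3L/2}(0)$ contains a cluster of diameter $\geq L/100$'' is an increasing event, \ref{hyp:non_perco} at scale $\asymp L$ bounds its probability under $\FKlaw^{1}$ by $e^{-cL}$, hence also under the law of $\sigma^{0,0}_{3K/2,s}$ at any fixed $s$. To upgrade this to uniform control over $s \in [K/2, 3K/2]$, use that $s \mapsto \sigma_s$ is piecewise constant, with jumps only at the Poisson arrivals in $E_{2L}(0) \times [0, 3K/2]$; their number is Poisson of mean $O(L^{d+1})$, and is $\leq C L^{d+1}$ off a super-exponentially rare event. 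A union bound over the resulting finite list of representative times preserves a rate of the form $e^{-c'L}$.

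For $A$, at any fixed $s \geq K/2 = \alpha L/2$ and edge $e \in E_{3L/2}(0)$, Theorem~\ref{thm:Glauber_point_to_box_connection} applied (by translation invariance and by monotonicity of the graphical construction in the dynamics volume) to a sub-box of $E_{2L}(0)$ of radius $\gtrsim L$ centred on $e$ gives $\p(\sigma_{3K/2,s,E_{2L}(0)}^{0,0}(e) \neq \sigma_{3K/2,s,E_{2L}(0)}^{1,1}(e)) \leq e^{-cL}$. Union-bounding over the $O(L^d)$ edges of $E_{3L/2}(0)$ and the $O(L^{d+1})$ representative times from $B$ absorbs the polynomial prefactors, yielding $\p(A^c) \leq e^{-c''L}$.

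The main obstacle I anticipate is the continuous quantifier over $s \in [K/2, 3K/2]$; it is handled by the piecewise-constant structure of the Poisson-driven dynamics combined with concentration of the arrival count, so that only finitely many ``active'' times need to be considered. The geometric adaptation of Theorem~\ref{thm:Glauber_point_to_box_connection} to an edge in the interior of $E_{2L}(0)$ (rather than at the centre of $E_N$) is routine, relying only on translation invariance and the monotonicity-in-volume of the coupling.
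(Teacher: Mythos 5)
Your proposal follows essentially the same route as the paper: decompose failure of good-closeness into a coupling-failure part controlled by Theorem~\ref{thm:Glauber_point_to_box_connection} and a large-cluster part controlled by~\ref{hyp:non_perco}, each handled by a union bound over the $O(L^{d+1})$ relevant edges and Poisson update times (the paper defers this implementation to \cite[Lemma 4.3]{Ott-2019}, exactly as you sketch it). The only, harmless, deviation is in the cluster part: you bound $\p(B^{c})$ directly via stochastic domination of the time-$s$ law started from the empty configuration by the stationary measure $\FKlaw^{0}_{E_{2L},\beta,q}\preccurlyeq\FKlaw^{1}_{E_{2L},\beta,q}$ together with monotonicity of the cluster event, whereas the paper keeps the coupling event in the second term so that the restricted configuration is exactly equilibrium-distributed before invoking~\ref{hyp:non_perco}; both justifications are correct.
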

\begin{proof}
	\begin{multline*}
		\p(0 \textnormal{ is not good-close}) \leq \p(\exists (t,e)\in[0,K]\times E_{3L/2}: \ \sigma_{t,E_{2L}}^{0,0}(e) \neq \sigma_{t,E_{2L}}^{1,1}(e)) +\\
		+ \p(\{\cdots\}, \exists s\in[K/2,3K/2]:\ \sigma_{3K/2,s,E_{2L},E_{3L/2}}^{0,0}\in A_L)
	\end{multline*}
	where \(\{\cdots\}\) means \(\{\forall s\in[K/2,3K/2]: \ \sigma_{3K/2,s,E_{2L},E_{3L/2}}^{0,0} = \sigma_{3K/2,s,E_{2L},E_{3L/2}}^{1,1}\}\), and \(A_L\) is the event that there exists a cluster in \(E_{3L/2}\) of diameter at least \(L/100\). The first term is bounded from above by \(c'L^{d+1}e^{-\tilde{c}L}\) for some \(c'\geq 0,\tilde{c}>0\) by Theorem~\ref{thm:Glauber_point_to_box_connection} and a union bound. The second term is upper bounded by \(c''L^{d+1}e^{-\tilde{c}'L}\) for some \(c''\geq 0,\tilde{c}'>0\) by hypotheses~\ref{hyp:non_perco} a union bound. Implementation of the union bounds is the same as the one in the proof of~\cite[Lemma 4.3]{Ott-2019}.
\end{proof}

\begin{lemma}
	\label{lem:good_open_whp}
	If~\ref{hyp:exp_weak_mix} and~\ref{hyp:perco} are satisfied, there exists \(c>0,L_0\geq 0\) such that for any \(L\geq L_0\),
	\begin{equation}
	\p(0 \textnormal{ is good-open}) \geq 1-e^{-cL}.
	\end{equation}
\end{lemma}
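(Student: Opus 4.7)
The plan is to follow the proof of Lemma~\ref{lem:good_close_whp} line by line, simply replacing the role of~\ref{hyp:non_perco} by that of~\ref{hyp:perco}. The event that $0$ fails to be open-good decomposes into (i) failure of the uniform sampling condition (the same event controlled in Lemma~\ref{lem:good_close_whp}), and (ii) failure of the spanning-plus-unique-large-cluster event $A'_{3L/2}$ at some time $s\in[K/2,3K/2]$, conditionally on the uniform sampling holding.

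Term (i) is handled verbatim as in Lemma~\ref{lem:good_close_whp}: Theorem~\ref{thm:Glauber_point_to_box_connection} applied at scale $2L$, together with a union bound over the $O(L^d)$ edges of $E_{3L/2}$ and the $O(L^{d+1})$ update times in the space-time box, yields an aggregate bound $c'L^{d+1}e^{-\tilde cL}$. Choosing $\alpha$ large enough (recall $K=\alpha L$) ensures that the coupling provided by Theorem~\ref{thm:Glauber_point_to_box_connection} is already effective for elapsed times $\geq K/2$.

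For term (ii), on the coupling event one may substitute $\sigma^{0,0}$ by $\sigma^{1,1}$. The configuration $\sigma_{3K/2,s,E_{2L}}^{1,1}$ is the Glauber dynamic in $E_{2L}$ started from all-$1$ with wired boundary, whose stationary law is $\FKlaw^{1}_{E_{2L}}$. Applying Theorem~\ref{thm:Glauber_point_to_box_connection} edge-wise on $E_{3L/2}$ shows that the $E_{3L/2}$-marginal of $\sigma_{3K/2,s,E_{2L}}^{1,1}$ lies within $e^{-cL}$ in total variation of the corresponding marginal of $\FKlaw^{1}_{E_{2L}}$ for every $s\geq K/2$. Hypothesis~\ref{hyp:perco}, realized at the wired boundary condition, then gives $\FKlaw^{1}_{E_{2L}}(A'_{3L/2})\geq 1-e^{-cL}$, and a final union bound over update times in $[K/2,3K/2]$ yields a bound of the same form as for term (i). Summing the two contributions and taking $L_0$ sufficiently large concludes.

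The main obstacle I anticipate is that $A'$ is not a monotone event in $\omega$ --- ``at most one cluster of diameter $\geq L/100$'' is neither increasing nor decreasing --- so one cannot transfer an estimate on $A'$ from the stationary measure to the finite-time dynamic via FKG alone. The remedy is to exploit Theorem~\ref{thm:Glauber_point_to_box_connection} in its total-variation form (coupling every edge of $E_{3L/2}$ simultaneously), which allows one to replace the dynamic directly by its stationary measure in the bound and reduces the second estimate to a single evaluation of~\ref{hyp:perco}.
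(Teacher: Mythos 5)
Your proposal is correct and takes essentially the same route as the paper, whose proof is literally ``repeat the proof of Lemma~\ref{lem:good_close_whp} with~\ref{hyp:perco} in place of~\ref{hyp:non_perco}''. Your additional remark on the non-monotonicity of \(A'\) and the resulting need to use Theorem~\ref{thm:Glauber_point_to_box_connection} in its coupling/total-variation form (rather than FKG-type domination) merely makes explicit a step the paper leaves implicit via its reference to~\cite{Ott-2019}.
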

\begin{proof}
	The proof is the same as the one of Lemma~\ref{lem:good_close_whp} using~\ref{hyp:perco} instead of~\ref{hyp:non_perco}.
\end{proof}

Now, notice that \(x\) being good-close or good-open depends only on the randomness (PPP and uniforms) inside \(E_{2L,3K/2}(x)\). In particular, the state of one site \(x\in \bbL_N^{L,K}\) is independent of the state of the sites that are not nearest-neighbours or diagonal neighbours. This allows one to use~\cite[Theorem 1.3]{Liggett+Schonmann+Stacey-1997} to prove
\begin{lemma}
	\label{lem:domination_by_Bernoulli}
	Suppose that hypotheses~\ref{hyp:exp_weak_mix} and~\ref{hyp:non_perco} hold. Then, for any \(L\), there exists \(p_L\in[0,1]\) such that
	\begin{itemize}
		\item the set of sites that are not closed-good is dominated by a site Bernoulli percolation of parameter \(p_L\),
		\item \(p_L\to 0\) when \(L\to\infty\).
	\end{itemize}
	The same statement holds with~\ref{hyp:non_perco} replaced by~\ref{hyp:perco} and ``closed-good'' by ``open good''.
\end{lemma}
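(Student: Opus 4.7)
The plan is to view the indicator field $(\mathds{1}_{x \text{ is not closed-good}})_{x \in \bbL_N^{L,K}}$ as a finitely dependent random field and compare it with Bernoulli site percolation via the Liggett--Schonmann--Stacey domination theorem.

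First, I would observe that whether a point $x \in \bbL_N^{L,K}$ is closed-good depends only on the restriction of the Poisson point processes $\PPP_e$ and the uniforms $U_{e,k}$ to the space-time box $E_{2L,3K/2}(x)$, since that is the region needed to define $\sigma_{x^{\rmt}+3K/2,s,E_{2L}(x^{\rms}),\cdot}^{0,0}$ and $\sigma_{x^{\rmt}+3K/2,s,E_{2L}(x^{\rms}),\cdot}^{1,1}$ for $s\in[K/2,3K/2]$. Because the coarse-grained graph $\bbL_N^{L,K}$ is built so that $E_{2L,3K/2}(x)$ and $E_{2L,3K/2}(y)$ are disjoint as soon as $x$ and $y$ are neither nearest neighbours nor diagonal neighbours (this is why the boxes $E_L,E_{L,K}$ partition and $E_{2L},E_{2L,3K/2}$ are only a bounded factor larger), the indicators $(\mathds{1}_{x \text{ not closed-good}})_{x}$ form a $1$-dependent field in the sense of Liggett--Schonmann--Stacey: the value at $x$ is independent of the values at all $y$ outside the star-neighbourhood of $x$.

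Next, by Lemma~\ref{lem:good_close_whp}, under hypotheses~\ref{hyp:exp_weak_mix} and~\ref{hyp:non_perco} there is $c>0$ and $L_0$ such that for every $L\geq L_0$ and every $x$,
\begin{equation*}
    \p(x \text{ is not closed-good}) \leq e^{-cL}.
\end{equation*}
Invoking \cite[Theorem 1.3]{Liggett+Schonmann+Stacey-1997} with the $1$-dependent field above then gives a parameter $p_L\in[0,1]$ such that the set of non-closed-good sites is stochastically dominated by a Bernoulli site percolation of parameter $p_L$ on $\bbL_N^{L,K}$ (with its nearest+diagonal neighbour graph structure), and the theorem provides an explicit relation $p_L = f(e^{-cL})$ with $f(q)\to 0$ as $q\to 0$. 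Hence $p_L\to 0$ as $L\to\infty$.

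The open-good case is identical after replacing Lemma~\ref{lem:good_close_whp} by Lemma~\ref{lem:good_open_whp} and~\ref{hyp:non_perco} by~\ref{hyp:perco}. No step here is really delicate; the only point to double check is that the range of dependence coming from the space-time box $E_{2L,3K/2}(x)$ matches the star-neighbour graph structure on $\bbL_N^{L,K}$, which is precisely why the box sizes $2L$ and $3K/2$ were chosen relative to the partition spacings $2L+1$ and $K$. This is the only place where any computation is needed and it is routine.
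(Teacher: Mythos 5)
Your proposal is correct and follows essentially the same route as the paper: the paper's own (very brief) argument is exactly the observation that the event $\{x \text{ is good}\}$ is measurable with respect to the PPPs and uniforms in $E_{2L,3K/2}(x)$, hence the field of bad sites is finitely dependent beyond star-neighbours in $\bbL_N^{L,K}$, and then Lemma~\ref{lem:good_close_whp} (resp.\ Lemma~\ref{lem:good_open_whp}) combined with \cite[Theorem 1.3]{Liggett+Schonmann+Stacey-1997} yields domination by Bernoulli site percolation with parameter $p_L\leq e^{-c'L}\to 0$. Your additional check that the space-time boxes of sites at coarse graph distance at least $2$ are disjoint is precisely the point the paper leaves implicit, and it is carried out correctly.
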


\begin{remark}
	One gets the quantitative bound \(p_L\leq e^{-c'L}\) for some \(c'>0\).
\end{remark}

\subsection{Information cluster and radius to volume bound}

Fix \(L\geq 0\). Let \(\calC'\) be the set of sites in \(\bbL_N^L\), that are at distance at most \(2\) from a point connected to \(0\) by a path of bad sites. Notice that \(\calC'\) contains necessarily a decoupling surface surrounding \(E_{L,K}\). Let \(\calC = \bigcup_{x\in \calC'} \{x^{\rms}\}\). The goal of this section is to prove
\begin{lemma}
	\label{lem:rad_to_vol}
	Suppose hypotheses~\ref{hyp:exp_weak_mix} and~\ref{hyp:exp_dec_finite_connexions} are satisfied. Then, there exist \(c>0,c'\geq 0\) such that
	\begin{equation}
		\p(|\calC|\geq k)\leq c'e^{-ck}.
	\end{equation}
	Moreover, \(c\to\infty\) as \(L\to\infty\).
\end{lemma}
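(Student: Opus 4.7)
The plan is to combine the stochastic domination of Lemma~\ref{lem:domination_by_Bernoulli} with a classical Peierls-style counting bound for subcritical site percolation on a bounded-degree graph. As a preliminary observation, the coarse-grained lattice $\bbL_N^{L,K}$ has maximum degree bounded by a constant $\Delta$ depending only on $d$ (even with the star-connectivity relevant to Lemma~\ref{lem:domination_by_Bernoulli}), and in particular the graph-ball of radius $2$ around any vertex has cardinality at most $C_d:=1+\Delta+\Delta^2$.

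Write $B\subset \bbL_N^{L,K}$ for the random set of sites that are \emph{not} good (in the relevant sense: closed-good under~\ref{hyp:non_perco}, open-good under~\ref{hyp:perco}), and let $P$ be the connected component of $0$ in $B$, with the convention $P=\{0\}$ if $0\notin B$. Since by definition $\calC'$ is the graph-$2$-neighbourhood of $P$ in $\bbL_N^{L,K}$ and $\calC$ is the spatial projection of $\calC'$, the deterministic bound
\[
|\calC|\leq |\calC'|\leq C_d\bigl(|P|+1\bigr)
\]
holds, so it suffices to establish an exponential tail on $|P|$ whose rate diverges as $L\to\infty$.

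For this, Lemma~\ref{lem:domination_by_Bernoulli} yields that $B$ is stochastically dominated, as a subset of $\bbL_N^{L,K}$, by a Bernoulli site percolation of parameter $p_L$ with $p_L\to 0$ as $L\to\infty$. Therefore $|P|$ is dominated by the cluster size of $0$ in this Bernoulli field. The number of connected subsets of $\bbL_N^{L,K}$ of size $n$ containing $0$ being at most $(e\Delta)^n$ by the standard tree-encoding bound, the plan is to conclude
\[
\p(|P|\geq n)\leq \sum_{k\geq n}(e\Delta\, p_L)^k\leq 2(e\Delta\, p_L)^n
\]
as soon as $L$ is taken large enough that $e\Delta\, p_L\leq 1/2$, and then combine with the $C_d$-fattening bound to obtain $\p(|\calC|\geq k)\leq c'e^{-ck}$ with $c=C_d^{-1}\log\bigl(1/(e\Delta\, p_L)\bigr)$, which indeed diverges as $L\to\infty$.

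The argument is essentially textbook once Lemma~\ref{lem:domination_by_Bernoulli} is available; there is no real obstacle. The only minor points requiring a little care are that the successive passages $P\leadsto\calC'\leadsto\calC$ cost only a dimension-dependent multiplicative constant, and that the star-vs-nearest-neighbour distinction implicit in the dependency structure is harmless since both graphs have degree bounded by a constant of the same order in $d$.
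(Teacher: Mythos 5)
Your proof is correct and follows essentially the same route as the paper: stochastic domination of the bad sites by a low-density Bernoulli field via Lemma~\ref{lem:domination_by_Bernoulli}, a Peierls/lattice-animal counting bound for the Bernoulli cluster of \(0\), and a deterministic comparison of \(\calC\) with that cluster, with the rate diverging because \(p_L\to 0\). The only (cosmetic) difference is that the paper absorbs the distance-\(2\) fattening by working directly with range-\(2\) connectivity of bad sites, whereas you keep nearest-neighbour clusters and pay the bounded factor \(C_d\); both are fine, and small \(k\) (e.g.\ your convention \(P=\{0\}\) when \(0\) is good) is harmlessly absorbed into the prefactor \(c'\).
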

\begin{proof}
	Suppose~\ref{hyp:non_perco} holds (the same procedure works for~\ref{hyp:perco}). First \(|\calC|\leq|\calC'|\). We say that two points \(x,y\in\bbL_N^{L,K}\) are connected if there exists a sequence of sites starting at \(x\) and ending at \(y\), that are not good-close, and such that any two consecutive sites in the sequence are a distance at most \(2\) (for the graph distance on \(\bbL_N^{L,K}\)). Choose \(L\) such that the \(p_L\) obtained via Lemma~\ref{lem:domination_by_Bernoulli} is \(\leq (100d)^{-2}\). Write \(C_0\) the set of sites that are connected to \(0\). Standard arguments (Peierls argument) for very small \(p\) Bernoulli percolation imply the existence of \(c>0,c'\geq 0\) such that
	\begin{equation*}
		\p(|C_0|\geq k)\leq c'e^{-ck}.
	\end{equation*}
	As \(\calC'\subset C_0\), this concludes the proof. Taking \(L\) larger allows to take \(p_L\) as small as wanted and thus \(c\) as large as wanted.
\end{proof}

\subsection{Gathering the pieces, proof of Theorem~\ref{thm:dep_encoding_coupling}}

Consider the measure \(\p\). Set \(\omega = \lim_{t\to\infty} \sigma_{t,B_N}^{1,\mathrm{per}}\). Then, let \(\tilde{C}_x\) be the set of sites (in \(\bbL_N^L\), containing \(x\)) that are at distance at most \(2\) from a point connected to \(x\) by a path of bad sites. Define \(C_x=\bigcup_{y\in \tilde{C}_x} \{y^{\rms}\} \). \(\omega\sim \FKlaw_{N,\beta,q}^{\mathrm{per}}\) by convergence of the dynamic in finite volume. \eqref{eq:decoupling} is a direct consequence of the decoupling property of decoupling surfaces, of the product form of \(\p\) and of the dependency of the event \(\{C_x=C\}\) on the PPPs and uniforms attached to sites of \(C\) only. \eqref{eq:exp_dec_dep_clusters} is then Lemma~\ref{lem:rad_to_vol}.

\section{Proof of the main result}
\label{section:polymer_rep_convergence_cluster_expansion}

We present here the proof of Theorem~\ref{thm:main} conditionally on Theorem~\ref{thm:dep_encoding_coupling}. For this section, we fix \(q\geq 1\) and omit it from notation. We also fix \(\beta\) such that hypotheses~\ref{hyp:exp_weak_mix} and~\ref{hyp:exp_dec_finite_connexions} are verified for \(\FKlaw_{\beta,q}\). As we are working only with the Random Cluster model in this section, we will also drop the associate dependency in the notations. Most arguments in this section are very close to the ones presented in~\cite{Ott-2019}. They are nevertheless repeated with the hope to improve their presentation.

Recall that the pressure is
\begin{equation*}
\psi(\beta) = \lim_{N\to\infty}\frac{1}{|\mathbb{T}_N|}\log(Z_{\mathbb{T}_N,\beta}^{\mathrm{per}}).
\end{equation*}
We want to study it in a complex neighbourhood of \(\beta\). We are thus interested in the limit as \(N\to\infty\) of
\begin{equation*}
\frac{1}{|\mathbb{T}_N|}\log(Z_{\mathbb{T}_N,\beta+z}^{\mathrm{per}}) = \frac{1}{|\mathbb{T}_N|}\Big[\log(Z_{\mathbb{T}_N,\beta}^{\mathrm{per}}) + \log(\FKlaw_{N,\beta}^{\mathrm{per}}\Big( \frac{(e^{\beta+z}-1)^{|\omega|}}{(e^{\beta}-1)^{|\omega|}}\Big))\Big].
\end{equation*}
As we know that the first term in the RHS converges to \(\psi(\beta)\), we need to study the existence and analyticity of
\begin{equation*}
F_{\beta}(z) := \lim_{N\to\infty} F_{N,\beta}(z) := \lim_{N\to\infty} \frac{1}{|\mathbb{T}_N|}\log(\FKlaw_{N,\beta}^{\mathrm{per}}\Big( \frac{(e^{\beta+z}-1)^{|\omega|}}{(e^{\beta}-1)^{|\omega|}}\Big)).
\end{equation*}
We will write \(G_N(z)=\FKlaw_{N,\beta}^{\mathrm{per}}\Big( \frac{(e^{\beta+z}-1)^{|\omega|}}{(e^{\beta}-1)^{|\omega|}}\Big)\).

The goal of this section is the proof of
\begin{lemma}
	\label{lem:F_analytic}
	For any \(\beta,q\) such that hypotheses~\ref{hyp:exp_weak_mix} and~\ref{hyp:exp_dec_finite_connexions} hold, there exists \(\epsilon>0\) such that \(F_{\beta}(z)\) exists and is analytic in the domain \(\{|z|<\epsilon\} \).
\end{lemma}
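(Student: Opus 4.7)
The strategy is a cluster expansion built on top of the coupling from Theorem~\ref{thm:dep_encoding_coupling}. Let $r(z)=\frac{e^{\beta+z}-1}{e^{\beta}-1}$, which is entire with $r(0)=1$. Since the boxes $(E_L(x))_{x\in\bbT_N^L}$ partition $E_{\bbT_N}$, if $(C,\omega)\sim P$ we may rewrite
\begin{equation*}
G_N(z)=\mathbb{E}_P\Bigl[\prod_{x\in\bbT_N^L}\bigl(1+h_x(z)\bigr)\Bigr],\qquad h_x(z):=r(z)^{|\omega\cap E_L(x)|}-1.
\end{equation*}
Each $h_x(z)$ depends only on $\omega|_{E_L(x)}$, vanishes at $z=0$, and a direct estimate using $|r(z)-1|=O(|z|)$ together with $|\omega\cap E_L(x)|\leq|E_L|$ yields $\sup_\omega|h_x(z)|\leq c_L|z|$ on $|z|\leq 1$ for some constant $c_L$ depending only on $L,\beta$. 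Expanding the product and refining by conditioning on the dependency clusters $(C_x)_{x\in A}$ supplied by the coupling,
\begin{equation*}
G_N(z)=\sum_{A\subset\bbT_N^L}\sum_{(D_x)_{x\in A}}\mathbb{E}_P\Bigl[\prod_{x\in A}h_x(z)\prod_{x\in A}\mathds{1}_{C_x=D_x}\Bigr].
\end{equation*}

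To recast this as a polymer partition function, I group $A$ by the reflexive-transitive closure of the relation $x\sim y\iff D_x\cap D_y\neq\varnothing$. On each equivalence class $A_i$ the union $\bigcup_{x\in A_i}D_x$ is connected via the intersection hypergraph of the $D_x$, and distinct classes have disjoint unions, so iterated use of the decoupling identity~\eqref{eq:decoupling} factorizes the expectation over classes. Call a \emph{polymer} a pair $\gamma=(A_\gamma,(D_x)_{x\in A_\gamma})$ with $A_\gamma\subset\bbT_N^L$ finite, $x\in D_x\subset\bbT_N^L$, and the associated intersection hypergraph connected; set $\underline\gamma:=\bigcup_{x\in A_\gamma}D_x$ and activity
\begin{equation*}
w(\gamma,z):=\mathbb{E}_P\Bigl[\prod_{x\in A_\gamma}h_x(z)\prod_{x\in A_\gamma}\mathds{1}_{C_x=D_x}\Bigr];
\end{equation*}
two polymers are compatible when their supports are disjoint. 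The rearrangement above gives $G_N(z)=\sum_{\{\gamma_1,\ldots,\gamma_k\}\text{ compat.}}\prod_i w(\gamma_i)$, a polymer partition function in the sense of Koteck\'y--Preiss.

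The activity bound combines $|h_x|\leq c_L|z|$ with Corollary~\ref{cor:exp_dec_cluster_set} applied to $\Delta=A_\gamma$ and $l=|\underline\gamma|$:
\begin{equation*}
|w(\gamma,z)|\leq(c_L|z|)^{|A_\gamma|}\,P\bigl(|C_{A_\gamma}|\geq|\underline\gamma|\bigr)\leq(c_L|z|)^{|A_\gamma|}\,e^{-c|\underline\gamma|+a|A_\gamma|}.
\end{equation*}
Since $|A_\gamma|\leq|\underline\gamma|$, the remark after Corollary~\ref{cor:exp_dec_cluster_set} lets me fix $L$ so large that $c$ dominates $a$ plus the (at most exponential) entropic cost of enumerating connected subsets of $\bbT_N^L$ of a given size; then choosing $\epsilon>0$ small enough gives $|w(\gamma,z)|\leq e^{-c''|\underline\gamma|}$ with $c''$ large enough for the Koteck\'y--Preiss criterion to hold uniformly in $N$ and in $z$ on $\{|z|<\epsilon\}$. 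The standard cluster expansion then produces $\log G_N(z)=\sum_X\Phi(X;z)$, absolutely and uniformly convergent there, with each $\Phi(X;z)$ polynomial in $z$; translation invariance in the bulk of $\bbT_N$ (with boundary correction $O(N^{-1})$) shows that $\frac{1}{|\bbT_N|}\log G_N(z)$ converges uniformly on $\{|z|<\epsilon\}$, so $F_\beta(z)$ exists and is analytic there as a uniform limit of analytic functions. The main obstacle is combinatorial rather than analytic: unlike the local Ising case of~\cite{Ott-2019}, the non-locality of the Random Cluster Glauber dynamic forces the inclusion of the whole family $(D_x)_{x\in A_\gamma}$ as part of the polymer, so that compatibility is governed by $\underline\gamma$ in $\bbT_N^L$ and~\eqref{eq:decoupling} genuinely applies; once this bookkeeping is in place, everything reduces to a quantitative Koteck\'y--Preiss estimate powered by Corollary~\ref{cor:exp_dec_cluster_set}.
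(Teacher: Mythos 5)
Your overall route is the paper's route: the same \(\pm1\) expansion of \(G_N(z)\) over the blocks \(E_L(x)\), the same conditioning on the dependency clusters of Theorem~\ref{thm:dep_encoding_coupling}, the same use of Corollary~\ref{cor:exp_dec_cluster_set} for the activity bound, and a Koteck\'y--Preiss style convergence criterion. The place where you genuinely deviate is the bookkeeping, and that is where your argument as written breaks. You take as polymer the \emph{decorated} object \(\gamma=(A_\gamma,(D_x)_{x\in A_\gamma})\) and then verify convergence by ``per-polymer bound \(|w(\gamma,z)|\le e^{-c''|\underline\gamma|}\) times the (at most exponential) entropy of connected subsets''. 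But the entropy of your polymers is not that of connected subsets: for a fixed support \(\underline\gamma\) of size \(n\) the number of admissible decorations \((D_x)_{x\in A_\gamma}\) grows like \(c^{n|A_\gamma|}\), i.e.\ super-exponentially in \(n\) when \(|A_\gamma|\) is of order \(n\), so no fixed choice of \(L\) and \(\epsilon\) makes \(e^{-c''n}\) beat it. The repair is exactly the point of the paper's definition of the weight in \eqref{eq:CE_polymers}: the events \(\{C_x=D_x\ \forall x\in A\}\) for distinct decoration families are pairwise disjoint, so the sum of their probabilities over all decorations with union \(\underline\gamma\) collapses to \(P(C_A=\underline\gamma)\le e^{-c|\underline\gamma|+a|A|}\); summing then over \(A\subset\underline\gamma\) costs only \((1+c_L|z|e^{a})^{|\underline\gamma|}\), and only the support needs to be enumerated (this is the computation \eqref{eq:weight_bnd_decomp} in Lemma~\ref{lem:upper_bnd_weights}). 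Either state this disjointness-collapse explicitly when performing the Koteck\'y--Preiss sum over your decorated polymers, or, as in the paper, take the polymer to be the support \(\gamma\) and put \(\sum_{A\subset\gamma}\sum_{(\tilde\gamma_x)}\) inside \(w_z(\gamma)\).

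A second, smaller gap is the passage to the limit. Your weights depend on \(N\) through the coupling measure \(P\) of Theorem~\ref{thm:dep_encoding_coupling}, which is built from the dynamics on the torus; moreover the torus has no boundary, so ``translation invariance with boundary correction \(O(N^{-1})\)'' does not by itself give convergence of \(\frac{1}{|\bbT_N|}\log G_N(z)\) --- you would additionally have to show that the cluster weights stabilize as \(N\to\infty\). The paper avoids this entirely: the convergent expansion \eqref{eq:ClusterExpansion} is used only to conclude that \(G_N(z)\neq0\) and that the \(F_{N,\beta}\) are analytic and uniformly bounded on \(\{|z|<\epsilon\}\), while convergence for real \(z\) is free from the existence of the pressure; Vitali's theorem then yields existence and analyticity of \(F_\beta\). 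I recommend adopting that concluding step (or supplying the missing stabilization argument for the \(N\)-dependent weights).
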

Which directly implies Theorem~\ref{thm:main}.

\subsection{Random cluster model pressure and associated polymer models}

In this section, we re-write \(G_N(z)=\FKlaw_{N,\beta}^{\mathrm{per}}\Big( \frac{(e^{\beta+z}-1)^{|\omega|}}{(e^{\beta}-1)^{|\omega|}}\Big)\) as the partition function of a polymer model, suitable for the use of cluster expansion.

Set \(\alpha_{z} = \frac{e^{\beta+z}-1}{e^{\beta}-1}\). We now develop a polymer representation for \(\FKlaw_{\beta}\big( \alpha_z^{|\omega|}\big)\) (we keep the \(\mathrm{per}\) and the \(N\) implicit and drop them from the notation from now on). Start by taking a scale \(L\in\{0,1,2,3,\cdots\}\) (we will always consider \(N\) such that \(2N+1\) is divisible by \(2L+1\)). For \(x\in\bbT_N^L\), denote
\begin{equation*}
	f_x(\omega) = \alpha_{z}^{|\omega\cap E_L(x)|}-1.
\end{equation*}

First perform a ``\(+1-1\)'' expansion:
\begin{equation}
\label{eq:CE_Mayer}
	\FKlaw_{\beta}\big( \alpha_z^{|\omega|}\big) = \FKlaw_{\beta}\Big( \prod_{x\in \bbT_N^L}(f_x(\omega) +1) \Big) = \sum_{A\subset \bbT_N^L} \FKlaw_{\beta}\Big( \prod_{x\in A}f_x(\omega) \Big).
\end{equation}
We then use the measure \(P\) constructed in~\ref{thm:dep_encoding_coupling}. Recall that \(\Gamma_N^L\) is the set of connected sub-graphs of \(\bbT_N^{L}\). Denote \(\Gamma_N^L(x)\) the subset of \(\Gamma_N^L\) containing all polymers (connected graphs) containing \(x\). Then,
\begin{equation*}
	\FKlaw_{\beta}\Big( \prod_{x\in A}f_x(\omega) \Big) = \sum_{\gamma_x, x\in A }P\Big( \prod_{x\in A}f_x(\omega)\mathds{1}_{C_x=\tilde{\gamma}_x} \Big),
\end{equation*}where the sum is over \(\tilde{\gamma}_x\in\Gamma_{N}^L(x)\).

To each collection \(\tilde{\gamma}_x, x\in A\), one associate the collection of the connected components of the union \(\bigcup_A \tilde{\gamma}_x\), denoted \(\bar{\gamma}\). This is a set of disjoints polymers. One has
\begin{equation}
\label{eq:CE_facto}
	P\Big( \prod_{x\in A}f_x(\omega)\mathds{1}_{C_x=\tilde{\gamma}_x} \Big) = \prod_{\gamma\in \bar{\gamma}} P\Big( \prod_{x\in A\cap \gamma}f_x(\omega)\mathds{1}_{C_x=\tilde{\gamma}_x} \Big).
\end{equation}
One then expand~\eqref{eq:CE_Mayer} by first summing over \(\bar{\gamma}\) and then summing over compatible \(A\)s and collections \(\gamma_x,x\in A\). Using~\eqref{eq:CE_facto}, one obtains
\begin{multline}
\label{eq:CE_polymers}
	\FKlaw_{\beta}\Big( \prod_{x\in A}f_x(\omega) \Big) = \sum_{\bar{\gamma}}\sum_{A\subset \cup_{\bar{\gamma}}} \sum_{\{\gamma_x\}_{x\in A}\sim \bar{\gamma} } \prod_{\gamma\in \bar{\gamma}} P\Big( \prod_{x\in A\cap \gamma}f_x(\omega)\mathds{1}_{C_x=\tilde{\gamma}_x} \Big)=\\
	\sum_{\bar{\gamma}\subset \Gamma_{N}^L} \prod_{\{\gamma,\gamma'\}\subset \bar{\gamma}}\delta(\gamma,\gamma') \prod_{\gamma\in \bar{\gamma}} \sum_{A\subset \gamma} \sum_{\substack{\{\tilde{\gamma}_x\}_{x\in A}\\ \cup\tilde{\gamma}_x = \gamma} }  P\Big( \prod_{x\in A}f_x(\omega)\mathds{1}_{C_x=\tilde{\gamma}_x} \Big)\equiv\\
	\equiv \sum_{\bar{\gamma}\subset \Gamma_{N}^L} \prod_{\{\gamma,\gamma'\}\subset \bar{\gamma}}\delta(\gamma,\gamma') \prod_{\gamma\in \bar{\gamma}}w_z(\gamma),
\end{multline}where \(\delta(\gamma,\gamma')\) is \(0\) if \(\gamma\cup\gamma'\) is connected and \(0\) otherwise. We succeeded in writing \(G_N(z)\) is the form of a polymer model.

\subsection{Convergence of cluster expansion and analyticity}
We want to use the identity~\eqref{eq:ClusterExpansion}. But to do so, we need to guarantee convergence of the expansion. This is done via Theorem~\ref{thm:ClusterExp_Convergence} and the next lemma.
\begin{lemma}
	\label{lem:upper_bnd_weights}
	Under the hypotheses of Theorem~\ref{thm:dep_encoding_coupling}, there exists \(L\geq 0\), and \(\epsilon>0\) such that for any \(\gamma'\in\Gamma_{N}^L\) and \(|z|<\epsilon\),
	\begin{equation}
		\sum_{\gamma\in\Gamma_N^L} |w_z(\gamma)| e^{4d|\gamma|}|\delta(\gamma,\gamma')-1|\leq 4d|\gamma'|
	\end{equation} uniformly over \(N\) large enough, where \(w_z(\gamma),\delta(\gamma,\gamma')\) are defined as in~\eqref{eq:CE_polymers}.
\end{lemma}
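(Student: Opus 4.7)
The plan is to verify a Kotecky–Preiss type criterion by first establishing a strong pointwise bound on $|w_z(\gamma)|$ that decays exponentially in $|\gamma|$, then summing over $\gamma$ incompatible with $\gamma'$ using standard lattice combinatorics. The crucial input is Corollary~\ref{cor:exp_dec_cluster_set}, whose exponential rate $c$ can be made arbitrarily large by taking $L$ large.

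First I would bound $|f_x(\omega)|$ uniformly in $\omega$. Since $\alpha_z = \tfrac{e^{\beta+z}-1}{e^\beta-1} = 1 + O(|z|)$ for small $|z|$, and $|\omega \cap E_L(x)| \leq C_d L^d$, the factorisation $\alpha_z^k - 1 = (\alpha_z-1)\sum_{j=0}^{k-1}\alpha_z^j$ yields a bound of the shape $|f_x(\omega)| \leq \eta(z,L) := C_d L^d |z|\, e^{C_d L^d |z|}$, which can be made arbitrarily small by shrinking $|z|$ (at fixed $L$).

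Next I would control $w_z(\gamma)$. For fixed $A \subset \gamma$, the events $\{C_x = \tilde\gamma_x \text{ for all } x \in A\}$ are disjoint as $\{\tilde\gamma_x\}_{x\in A}$ varies, and they sum, under the constraint $\bigcup_{x\in A}\tilde\gamma_x = \gamma$ with $x \in \tilde\gamma_x \subset \gamma$, to the single indicator $\mathds{1}\{\bigcup_{x\in A} C_x = \gamma,\, C_x \ni x\;\forall x\in A\}$. Combined with the pointwise bound on $f_x$, this gives
\begin{equation*}
|w_z(\gamma)| \leq \sum_{A \subset \gamma} \eta(z,L)^{|A|}\, P\Bigl(\bigcup_{x\in A} C_x = \gamma\Bigr).
\end{equation*}
On the event $\{\bigcup_{x\in A} C_x = \gamma\}$ one has $|C_A| \geq |\gamma|$, so Corollary~\ref{cor:exp_dec_cluster_set} bounds this probability by $e^{-c|\gamma|} e^{a|A|}$. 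Summing over $A \subset \gamma$ gives $|w_z(\gamma)| \leq e^{-c|\gamma|}(1+e^{a}\eta(z,L))^{|\gamma|} \leq e^{-(c-\log 2)|\gamma|}$, provided $|z|$ is small enough that $e^{a}\eta(z,L) \leq 1$.

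Finally I would sum over $\gamma$ with $|\delta(\gamma,\gamma')-1|\neq 0$, i.e.\ $\gamma$ touching $\gamma'$. Using the standard bound that the number of connected subgraphs of $\bbT_N^L$ of size $n$ through a prescribed vertex is at most $e^{\kappa_d n}$, together with the fact that the set of vertices of $\bbT_N^L$ incompatible with $\gamma'$ has cardinality at most $(2d+1)|\gamma'|$, one gets
\begin{equation*}
\sum_{\gamma \in \Gamma_N^L} |w_z(\gamma)|\, e^{4d|\gamma|}\,|\delta(\gamma,\gamma')-1| \;\leq\; (2d+1)|\gamma'| \sum_{n \geq 1} e^{(\kappa_d + 4d - (c-\log 2)) n}.
\end{equation*}
Choosing $L$ large enough so that $c > \kappa_d + 4d + \log 2 + 2$ (permissible by the last clause of Corollary~\ref{cor:exp_dec_cluster_set}), and then $\epsilon$ small enough to ensure $e^{a}\eta(z,L)\leq 1$ for $|z|<\epsilon$, the geometric series is bounded by an arbitrarily small constant, giving the desired $4d|\gamma'|$ bound uniformly in $N$.

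The main obstacle is Step 2: turning the rather intricate sum over coverings $\{\tilde\gamma_x\}_{x\in A}$ of $\gamma$ into a clean probability estimate amenable to Corollary~\ref{cor:exp_dec_cluster_set}. Everything downstream is essentially combinatorial bookkeeping; the architectural point is that the construction of Theorem~\ref{thm:dep_encoding_coupling} precisely delivers an exponential decay rate that one can dial up at will by enlarging $L$, which is exactly what absorbs the lattice entropy factor $e^{(\kappa_d+4d)n}$ required by the Kotecky–Preiss criterion.
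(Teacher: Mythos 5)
Your proof is correct and follows essentially the same route as the paper's: bound \(|f_x|\) by a quantity made small after fixing \(L\) (so \(\epsilon\) is chosen to beat \(e^{a_L}\)), control \(P(C_A=\gamma)\) via Corollary~\ref{cor:exp_dec_cluster_set}, sum binomially over \(A\) to get \(|w_z(\gamma)|\leq 2^{|\gamma|}e^{-c_L|\gamma|}\), and absorb the lattice entropy together with the \(e^{4d|\gamma|}\) factor by taking \(L\) large. The only cosmetic difference is that the paper reduces the sum over \(\gamma\) incompatible with \(\gamma'\) to a per-polymer bound \(|w_z(\gamma)|e^{4d|\gamma|}c_d^{|\gamma|}\leq 2^{-|\gamma|}\) using translation invariance, whereas you carry the \((2d+1)|\gamma'|\) vertex count explicitly; the substance is identical.
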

\begin{proof}
	Let \(c_d\) be such that the number of polymers of size \(k\) containing \(0\) is at most \(c_d^k\). By translational invariance of the weights (which follows from translation invariance of \(\FKlaw_{\beta}\)), and the fact that \(\delta(\gamma,\gamma')-1 =0\) if \(\gamma\cup\gamma'\) is not connected, it is sufficient to show that for some \(L\geq 0\) and all \(|z|<\epsilon\),
	\begin{equation}
		|w_z(\gamma)|e^{4d|\gamma|}c_d^{|\gamma|}\leq 2^{-|\gamma|}.
	\end{equation}
	
	By Corollary~\ref{cor:exp_dec_cluster_set},
	\begin{equation*}
	P( C_A=\gamma)\leq P( |C_A|\geq |\gamma|)\leq e^{-c_L |\gamma|} e^{a_L |A|},
	\end{equation*} for some \(c_L>0,a_L\geq 0\) going to \(\infty\) with \(L\). Let \(L\) be fixed such that \(e^{-c_L}\leq 1/(4e^{4d}c_d)\).
	
	Let then \(\epsilon=\epsilon(L)>0\) be such that \(|\alpha_{z}^{k}-1|\leq e^{-a_L}\) for all \(k\in\{0,1,\cdots,|E_L|\}\) whenever \(|z|\leq \epsilon\).
	
	Observe that in this case, for any \(\omega,x\), \(|f_x(\omega)|\leq e^{-a_L}\). Now, fix a polymer \(\gamma\). We have, for any \(|z|<\epsilon\),
	\begin{equation}
	\label{eq:weight_bnd_decomp}
		|w_z(\gamma)| \leq \sum_{A\subset \gamma} e^{-a_L|A|}  P( C_A=\gamma)
		\leq \sum_{k=0}^{|\gamma|} \binom{|\gamma|}{k} e^{-a_L k} e^{-c_L |\gamma|} e^{a_L k} = 2^{|\gamma|}e^{-c_L |\gamma|}.
	\end{equation} By choice of \(L\), \(2^{|\gamma|}e^{-c_L |\gamma|}\leq (2e^{4d}c_d)^{-|\gamma|}\) which concludes the proof.
\end{proof}
We are now ready to prove Lemma~\ref{lem:F_analytic}.
\begin{proof}[Proof of Lemma~\ref{lem:F_analytic}]
	See \cite[Theorem 5.8]{Friedli+Velenik-2017} for a more detailed version of the same argument.
	
	Let \(L,\epsilon\) be given by Lemma~\ref{lem:upper_bnd_weights}. Using~\eqref{eq:ClusterExpansion}, we obtain that the functions \(G_N(z)\) are non-zero in the disk \(D_{\epsilon} = \{z\in\mathbb{C}:|z|<\epsilon\}\) for all \(N\) large enough. Thus, the functions \(F_{N,\beta}(z)\) form a family of analytic functions that are uniformly bounded on \(D_{\epsilon}\). Moreover, the limit \(N\to \infty\) exists for \(z\in\R\). Vitali Convergence Theorem implies that \(F_{\beta}(z)\) exists and is analytic on \(D_{\epsilon}\).
\end{proof}

\subsection{Sketch of the proof of Theorem~\ref{thm:Correlation_analy}}
\label{sec:sketch_Corr_analy}

The treatment of correlation functions is very similar to the one of the pressure. Recall that for \(A\subset \Ed\) finite, \(g_A(\omega)=\prod_{e\in A} \omega_e\). Then (again, dropping the volume from the notation),
\begin{equation*}
	\FKlaw_{\beta+z}(g_A) = \frac{1}{G_N(z)}\FKlaw_{N,\beta}^{\mathrm{per}}\Big( \frac{(e^{\beta+z}-1)^{|\omega|}}{(e^{\beta}-1)^{|\omega|}} g_A\Big).
\end{equation*} Define then \(\Delta_A\subset \bbT_N^L\) the set of vertices \(x\) such that \(E_L(x)\cap A\neq \varnothing\). One then introduce
\begin{equation*}
	\tilde{f}_x^A(\omega) = \begin{cases}
	f_x(\omega) &\text{ if } x\notin \Delta_A,\\
	\alpha_z^{|\omega\cap E_L(x)|}g_{A\cap E_L(x)} -1 &\text{ if } x\in\Delta_A,
	\end{cases}
\end{equation*} and the modified activities: 
\begin{equation*}
	w_{A,z}(\gamma) = \sum_{B\subset \gamma} \sum_{\substack{\{\tilde{\gamma}_x\}_{x\in B}\\ \cup\tilde{\gamma}_x = \gamma} }  P\Big( \prod_{x\in B}\tilde{f}_x^A(\omega)\mathds{1}_{C_x=\tilde{\gamma}_x} \Big),
\end{equation*}
so that
\begin{equation*}
	\FKlaw_{\beta+z}(g_A) = \frac{1}{G_N(z)} \sum_{\bar{\gamma}\subset \Gamma_{N}^L} \prod_{\{\gamma,\gamma'\}\subset \bar{\gamma}}\delta(\gamma,\gamma') \prod_{\gamma\in \bar{\gamma}}w_{A,z}(\gamma).
\end{equation*}
Then one proceed as for the pressure case (notice that the radius of convergence for the polymer model with modified activities depends on \(A\)). Using then~\eqref{eq:ClusterExpansion}, one gets
\begin{equation*}
	\FKlaw_{\beta+z}(g_A) = e^{\tilde{\sum} \cdots - \sum \cdots}
\end{equation*}with \(\tilde{\sum}\) the sum with modified activities. Both sums converge absolutely. The limit \(N\to\infty\) follows the same path as the proof of Lemma~\ref{lem:F_analytic}.

\section*{Acknowledgements}

The author thanks the university Roma Tre for its hospitality and is supported by the Swiss NSF through an early PostDoc.Mobility Grant. 

\appendix
\section{Cluster expansion}
\label{app:ClusterExp}

We recall here what is the cluster expansion of a pair interaction polymer model and a result about convergence of this expansion. The whole presentation can be found in \cite{Friedli+Velenik-2017} so we only state the results and refer to \cite[Chapter 5]{Friedli+Velenik-2017} for proofs and more details. This appendix is the one of~\cite{Ott-2019}, we include it here for the reader convenience.

\subsection*{The Framework}
Suppose we are given a set \(\Gamma\) (the set of polymers), a weighting \(w:\Gamma\to\mathbb{C}\) and an interaction \(\delta:\Gamma\times\Gamma\to[-1,1] \). The \emph{polymer partition function} is then given by
\begin{equation*}
Z = \sum_{H \subset \Gamma \text{ finite}} \Big(\prod_{\gamma\in H}w(\gamma)\Big) \Big(\prod_{\{\gamma,\gamma'\}\subset H} \delta(\gamma,\gamma')\Big).
\end{equation*}
The empty set contributes \(1\) to the sum. To state the formal equality, we need to define the \emph{Ursell function} of an ordered collection of polymers:
\begin{equation*}
U(\gamma_1) = 1,\quad U(\gamma_1,...,\gamma_n) = \frac{1}{n!} \sum_{\substack{G\subset K_n\\ \textnormal{connected}}} \prod_{\{i,j\}\in E_G} (\delta(\gamma_i,\gamma_j)-1),
\end{equation*}where \(K_{n}\) is the complete graph on \(\{1,...,n\}\), \(G=(\{1,...,n\},E_G)\) is an edge-subgraph of \(K_n\).
\subsection*{The Formal Equality} Equipped with this set-up, we have the equality (valid when the sum in the exponential is absolutely convergent)
\begin{equation}
\label{eq:ClusterExpansion}
Z = \exp(\sum_{n\geq 1} \sum_{\gamma_1}...\sum_{\gamma_n} U(\gamma_1,...,\gamma_n) \prod_{i=1}^n w(\gamma_i)).
\end{equation}
\subsection*{Convergence}
The result we will use is the following criterion for the absolute convergence of \(\sum_{n\geq 1} \sum_{\gamma_1}...\sum_{\gamma_n} U(\gamma_1,...,\gamma_n) \prod_{i=1}^n w(\gamma_i)\):
\begin{theorem}
	\label{thm:ClusterExp_Convergence}
	If there exists \(g:\Gamma\to\R_{>0}\) such that for every \(\gamma'\in\Gamma\)
	\begin{equation*}
	\sum_{\gamma\in\Gamma} e^{g(\gamma)} |w(\gamma)| |\delta(\gamma,\gamma')-1|\leq g(\gamma'),
	\end{equation*}
	and such that \(\sum_{\gamma\in\Gamma} e^{g(\gamma)}|w(\gamma)|<\infty\) then,
	\begin{equation*}
	\sum_{n\geq 1} \sum_{\gamma_1}...\sum_{\gamma_n} |U(\gamma_1,...,\gamma_n)| \prod_{i=1}^n |w(\gamma_i)|<\infty.
	\end{equation*}
\end{theorem}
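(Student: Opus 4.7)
The plan is to combine a tree-graph bound on the Ursell function with an inductive exponential estimate driven by the hypothesis. The starting observation is that $\delta(\gamma,\gamma')-1$ has constant (non-positive) sign since $\delta\in[-1,1]$, so the Penrose/Rota tree-graph inequality applies and yields the pointwise bound
\begin{equation*}
|U(\gamma_1,\ldots,\gamma_n)| \;\leq\; \frac{1}{n!}\sum_{T\in\mathcal{T}_n}\prod_{\{i,j\}\in E_T}|\delta(\gamma_i,\gamma_j)-1|,
\end{equation*}
where $\mathcal{T}_n$ denotes the set of spanning trees of the complete graph on $\{1,\ldots,n\}$. This replaces the connected-subgraph sum defining $U$ by a much more tractable tree sum.

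Next I would single out one vertex as the root $\gamma^*$ and use the standard correspondence between labelled trees and rooted plane trees, which absorbs the combinatorial prefactor $1/n!$ together with the $n$ choices of root. The quantity to be controlled is then bounded by $\sum_{\gamma^*}a(\gamma^*)$, where
\begin{equation*}
a(\gamma^*) \;:=\; |w(\gamma^*)|\sum_{\tau \text{ rooted plane tree at }\gamma^*}\prod_{\text{edges of }\tau}|\delta(\cdot,\cdot)-1|\prod_{\text{non-root vertices }\gamma\in\tau}|w(\gamma)|.
\end{equation*}
Decomposing a rooted plane tree into its ordered collection of $k\geq 0$ subtrees attached to the root yields the self-similar identity
\begin{equation*}
a(\gamma^*) \;=\; |w(\gamma^*)|\sum_{k\geq 0}\frac{1}{k!}\Big(\textstyle\sum_\gamma|\delta(\gamma^*,\gamma)-1|\,a(\gamma)\Big)^k \;=\; |w(\gamma^*)|\exp\!\Big(\textstyle\sum_\gamma|\delta(\gamma^*,\gamma)-1|\,a(\gamma)\Big),
\end{equation*}
which I interpret as an \emph{a priori} inequality obtained by truncating at tree size $\leq n$ and letting $n\to\infty$ at the end.

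Running an induction on tree size with the Ansatz $a(\gamma)\leq|w(\gamma)|e^{g(\gamma)}$, the main hypothesis gives immediately
\begin{equation*}
\textstyle\sum_\gamma|\delta(\gamma^*,\gamma)-1|\,a(\gamma) \;\leq\; \sum_\gamma|\delta(\gamma^*,\gamma)-1||w(\gamma)|e^{g(\gamma)} \;\leq\; g(\gamma^*),
\end{equation*}
so $a(\gamma^*)\leq|w(\gamma^*)|e^{g(\gamma^*)}$ is self-reproducing and passes to the truncation limit. Summing over $\gamma^*$ and invoking the second hypothesis $\sum_\gamma|w(\gamma)|e^{g(\gamma)}<\infty$ gives absolute convergence. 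The main technical obstacle is the combinatorial bookkeeping: accounting for the $1/n!$ in the definition of $U$ through a labelled-to-plane-tree correspondence, and justifying the truncation-and-limit procedure so that the induction closes uniformly in $n$. These steps are by now standard and are worked out in detail in Friedli--Velenik, Chapter~5, which is precisely why the appendix only states the result.
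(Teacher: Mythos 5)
Your outline is essentially correct, but note first that the paper contains no proof of this statement at all: the appendix is quoted verbatim from~\cite{Ott-2019} and the proof is delegated entirely to~\cite[Chapter 5]{Friedli+Velenik-2017}, so there is no ``paper route'' to compare against beyond that citation. What you sketch is one of the standard proofs of the Koteck\'y--Preiss criterion: Penrose-type tree-graph bound on the Ursell functions, then a self-consistent exponential estimate on the rooted-tree generating quantity \(a(\gamma^*)\), closed by induction on truncated tree size using the hypothesis \(\sum_{\gamma}e^{g(\gamma)}|w(\gamma)||\delta(\gamma,\gamma')-1|\leq g(\gamma')\), and summability from \(\sum_\gamma e^{g(\gamma)}|w(\gamma)|<\infty\). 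This is a legitimate and complete-in-outline argument, and arguably more self-contained than the paper's bare citation. Two imprecisions are worth fixing if you write it out. First, the tree-graph inequality is not justified by the sign of \(\delta(\gamma,\gamma')-1\) alone: the Penrose partition-scheme argument needs \(|1+\zeta_{ij}|\leq 1\), i.e.\ \(|\delta(\gamma_i,\gamma_j)|\leq 1\), which is exactly what \(\delta\in[-1,1]\) gives you --- non-positivity of \(\zeta_{ij}=\delta-1\) by itself would not suffice if \(\zeta_{ij}<-2\) were allowed. Second, the identity \(a(\gamma^*)=|w(\gamma^*)|\sum_{k\geq 0}\frac{1}{k!}\bigl(\sum_\gamma|\delta(\gamma^*,\gamma)-1|\,a(\gamma)\bigr)^k\) is not the plane-tree (ordered-children) decomposition, which would produce a geometric series without the \(1/k!\); the \(1/k!\) belongs to unordered children and is what absorbs the \(1/n!\) in \(U\) together with the choice of root in the labelled-tree count. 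Stated as an inequality for the truncated quantities \(a_n\), with the induction base \(a_1(\gamma^*)=|w(\gamma^*)|\leq|w(\gamma^*)|e^{g(\gamma^*)}\) (using \(g>0\)), the argument closes exactly as you say, so these are presentational rather than substantive gaps.
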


\bibliographystyle{plain}
\bibliography{BIGbib}

\end{document}